\documentclass[preprint]{elsarticle}

\usepackage{amsmath}
\usepackage{amssymb} 
\usepackage{pst-node}

\newcommand{\Nu}{\Upsilon}
\newcommand{\nat}{\mbox{$\mathbb{N}$}}
\newcommand{\supp}{\mathit{supp}}
\newcommand{\too}[1]{{\overset{#1}{\to}}}
\newcommand{\pre}{\mathit{pre}}
\newcommand{\Pred}{\mathit{Pred}}
\newcommand{\post}{\mathit{post}}
\newcommand{\Var}{\mathit{Var}}
\newcommand{\Id}{\mathit{Id}}
\newcommand{\minn}{\mathit{min}}
\newcommand{\mMS}[1]{#1^\oplus}
\newcommand{\uc}{\uparrow}
\newcommand{\uca}{\uc_\alpha}
\newcommand{\oM}{\overline{M}}

\newdefinition{definition}{Definition}
\newtheorem{lemma}{Lemma}
\newtheorem{proposition}{Proposition}

\newproof{proof}{Proof}
\newdefinition{example}{Example}


\begin{document}

\begin{frontmatter}

\title{Decision Problems for Petri Nets with Names
\footnote{Authors partially supported by 
the Spanish projects DESAFIOS10 TIN2009-14599-C03-01, PROMETIDOS S2009/TIC-1465 and TESIS TIN2009-14312-C02-01.}
}
\author{Fernando Rosa-Velardo}
\ead{fernandorosa@sip.ucm.es}
\author{David de Frutos-Escrig}
\ead{defrutos@sip.ucm.es}
\address{Dpto. de Sistemas Inform\'aticos y Computaci\'on\\
  Universidad Complutense de Madrid}

\begin{abstract}
  We prove several decidability and undecidability results for $\nu$-PN, an extension
  of P/T nets with pure name creation and name management. We give a simple
  proof of undecidability of reachability, by reducing reachability in nets with inhibitor
  arcs to it. Thus, the expressive power
  of $\nu$-PN strictly surpasses that of P/T nets. We prove that
  $\nu$-PN are Well Structured Transition Systems. In particular, we 
  obtain decidability of coverability and termination, so that the expressive power of Turing machines
  is not reached. Moreover, they are strictly Well Structured, so that the
  boundedness problem 
  is also decidable. We consider two properties,
  width-boundedness and depth-boundedness, that factorize boundedness. 
  Width-boundedness has already been proved to be decidable.
  We prove here undecidability of depth-boundedness. Finally, we obtain Ackermann-hardness
  results for all our decidable decision problems.
\end{abstract}

\begin{keyword}
 Petri nets \sep pure names \sep Well Structured Transition Systems \sep decidability 
\end{keyword}

\end{frontmatter}

\section{Introduction}

Pure names are identifiers with no relation between them
other than equality~\cite{Gordon00}. 
Dynamic name generation has been thoroughly studied, mainly
in the field of security and mobility~\cite{Gordon00}
 because they can
be used to represent channels, as in \hbox{$\pi$-calculus}~\cite{MilnerPW92}, ciphering keys,
as in spi-Calculus~\cite{AbadiG97} or
computing boundaries, as in the Ambient Calculus~\cite{cardelli98mobile}.

In previous works we have studied a very simple extension of P/T nets~\cite{Desel98},
that we called $\nu$-PN~\cite{Secco,fi08}, for name creation and
management.\footnote{Actually, we used the term $\nu$-APN, where the A stands for \emph{Abstract},
though we prefer to use this simpler acronym.} Tokens
in $\nu$-PN are pure names, that can
be created fresh, moved along the net and be used to restrict the firing
of transitions with name matching. 
They essentially correspond to the minimal
OO-nets of~\cite{Kummer00}, where names are used to identify objects.

In this paper we prove several (un)decidability and complexity results
for some decision problems in $\nu$-PN. 
In~\cite{Kummer00} the author proved that reachability is undecidable
for minimal OO-nets, thus
proving that the model surpasses the expressive power of P/T
nets. The same result was obtained independently in~\cite{Secco} for $\nu$-PN.
Both undecidability proofs rely on a weak simulation of a Minsky
machine that preserves reachability.
We present here an alternative and simpler proof of the
same result, based on a simulation of Petri nets with
inhibitor nets (thus, with a much smaller representation gap) that
reduces reachability in the latter (which is undecidable) to reachability 
in $\nu$-PN.

In~\cite{Secco} we proved well structuredness~\cite{AbdullaCJT00,Finkel01Everywhere} 
of a class of nets
we called MSPN. It is easy to see that $\nu$-PN can easily encode MSPN.
We present here full details of the proof of well structuredness
for $\nu$-PN instead of for MSPN, since the former is a much more cleaner
formalism. This gives us decidability of coverability (which is an important
property, since safety properties can be specified in terms of it) and termination~\cite{AbdullaCJT00,Finkel01Everywhere}. 
We also prove that the well structuredness of $\nu$-PN is strict,
so that boundedness (whether there are infinitely many reachable markings) is also decidable~\cite{Finkel01Everywhere}.
Moreover, we work with an extended version of $\nu$-PN,
in which we allow weights in arcs, simultaneous creation of several
fresh names and checks for inequality.

$\nu$-PN can represent infinite state systems that
can grow in two orthogonal directions:
On the one hand, markings may have an unbounded number of different names;
On the other hand, each name may appear in markings an unbounded number of times.
In the first case we will say the net is width-unbounded, and in the second we
will say it is depth-bounded.
In~\cite{atpn10} we proved decidability of width-boundedness by performing
a forward analysis that, though incomplete in general for the computation
of the cover, can decide width-boundedness. In particular,
we instantiated the general framework developed in~\cite{forwardI,forwardII}
for forward analyses of WSTS in the case of $\nu$-PN.

Here we prove undecidability of depth-boundedness. Thus, though
both boundedness concepts are closely related, they behave very differently.
The proof reduces boundedness in reset nets, which is known
to be undecidable~\cite{dufourd98reset}, to depth-boundedness in $\nu$-PN.
This result can be rather surprising. Actually,
the paper~\cite{Dietze07} erroneously establishes the decidability of
depth-boundedness (called t-boundedness there).

\noindent\textbf{Related work. }
Another model based on Petri nets that has names as tokens are
\emph{Data Nets}, which are also WSTS~\cite{Lazic08}. 
In Data Nets, tokens are not pure
in general, but taken from a linearly-ordered infinite domain. Names 
can be created, but they can only be guaranteed to be fresh by
explicitely using the order in the data domain, by taking a datum which
is greater than any other that has been used. Thus, in an
unordered version of Data Nets, names cannot be guaranteed to be
fresh.

Other similar models include Object Nets~\cite{Valk95,Valk98},
that follow the so called nets-within-nets paradigm. In Object
Nets, tokens can themselves be Petri nets that synchronize with
the net in which it lies. This model is supported by the RENEW 
tool~\cite{RENEW}, a tool for the edition and simulation
of Object Petri Nets. Moreover, the RENEW tool can represent $\nu$-PN and, 
therefore, be used to simulate them.

Several papers study the expressive power of Object Nets. 
The paper~\cite{Kohler07} considers a two level restriction of Object Nets,
called Elementary Object Nets (EON), and proves undecidability of reachability for
them. This result extends those in~\cite{KohlerRolke04}.
Moreover, some subclasses are proved to have decidable reachability.
In~\cite{Kohler09} it is shown that, when the synchronization mechanism
is extended so that object tokens can be communicated, then Turing completeness
is obtained. However, in all these models processes (object nets) do not
have identities.

Nested Petri Nets~\cite{Lomazova00} also have nets as tokens, 
that can evolve autonomously, move along the 
system net, synchronize with each other 
or synchronize with the system net (vertical synchronization steps). 
Nested nets are more expressive than $\nu$-PN.
Indeed, it is possible to simulate every $\nu$-PN by means of
a Nested Petri Net which uses only object-autonomous and
horizontal synchronization steps. In Nested Petri Nets, 
reachability and boundedness are undecidable, although
other problems, like termination, remain decidable~\cite{LomazovaSch00}.
Thus, decidability of termination can also be obtained as a 
consequence of~\cite{LomazovaSch00}. Here we obtain decidability
of termination on the way of the proof of 
decidability for boundedness and coverability.

\noindent\textbf{Outline. }
The rest of the paper is structured as follows. Section~\ref{sec:basic}
presents some basic results and notations we will use throughout the
paper. Section~\ref{sec:names} defines $\nu$-PN. Section~\ref{sec:reach} proves undecidability of reachability.
In Sect.~\ref{sec:coverability} we prove decidability of coverability, termination and boundedness,
and we give non-primitive recursive lower bounds for their decision procedures.
Section~\ref{sec:bounded} presents
further results about boundedness and in Section~\ref{sec:conclusion} we present our conclusions.

\section{Preliminaries}\label{sec:basic}

\noindent\textbf{Multisets. }
Given an arbitrary set $A$, we will denote by $\mMS{A}$ the set
of finite multisets of $A$, that is, the set of mappings
$m:A\to\nat$. We will identify each set with the multiset given
by its characteristic function, and use set notation for multisets when
convenient. We denote by $\supp(m)$ the support of $m$, that is, the
set $\{a\in A\mid m(a)>0\}$ and by $|m|=\underset{a\in\supp(m)}{\sum}m(a)$
the cardinality of $m$. Given two multisets $m_1,m_2\in\mMS{A}$ we
denote by $m_1+m_2$ and $m_1\sqcup m_2$ the multisets defined by $(m_1+m_2)(a)=m_1(a)+m_2(a)$
and $(m_1\sqcup m_2)(a)=max\{m_1(a),m_2(a)\}$, respectively.
We will write $m_1\subseteq m_2$ if $m_1(a)\leq m_2(a)$ for every $a\in A$.
In this case, we can define $m_2-m_1$, given by $(m_2-m_1)(a)=m_2(a)-m_1(a)$.
We will denote by $\sum$ the extended multiset sum operator and
by $\emptyset\in\mMS A$ the multiset $\emptyset(a)=0$, for every $a\in A$.
If $f:A\to B$ and $m\in\mMS A$, then we define $f(m)\in\mMS B$ by
$f(m)(b)=\underset{f(a)=b}{\sum}m(a)$. 
Every partial order $\leq$ defined over $A$
induces a partial order $\sqsubseteq$ in the set $\mMS A$, given by 
 $\{a_1,\ldots,a_n\}\sqsubseteq\{b_1,\ldots,b_m\}$ if there is $\iota:\{1,\ldots,n\}\to\{1,\ldots,m\}$ 
injective such that $a_i\leq b_{\iota(i)}$ for all $i$.
We will write $\sqsubseteq_\iota$
to  stress out the use of the mapping $\iota$.\\

\noindent\textbf{wqo. }
A quasi order is a reflexive and transitive binary relation on a set $A$. 
A partial order is an antisymmetric quasi order. 
A quasi order $\leq$ is decidable if
for every $a,b\in A$ we can effectively decide if $a\leq b$. All the quasi orders in this paper are trivially
decidable. For a quasi order $\leq$ we write $a<b$ if $a\leq b$ and $b\not\leq a$. 
A set $B\subseteq A$ is said to be a minor set of $A$ if it does
not contain comparable elements and for all $a\in A$ there is $b\in B$ such that
$b\leq a$. We will write $\minn(A)$ to denote a minor set of $A$. The upward closure
of a subset $B$ is $\uparrow B=\{a\in A\mid\exists b\in B \text{~st~} a\leq b\}$. A subset $B$ is
upward closed iff $B=\uparrow B$. A quasi order is 
well (wqo)~\cite{Milner85} if for every infinite sequence $a_0,a_1,\ldots$ there are
$i$ and $j$ with $i<j$ such that $a_i\leq a_j$. In a wqo $\minn(B)$ is always finite.\\

\noindent\textbf{Transition systems. }
A transition system is a tuple $(S,\to,s_0)$, where $S$ is a (possibly infinite) set 
of states, $s_0\in S$ is the initial state and $\to\subseteq S\times S$. We denote by
$\to^*$ the reflexive and
transitive closure of $\to$. Given $S'\subseteq S$ we denote by 
$\Pred(S')$ the set $\{s\in S\mid s\to s'\in S'\}$.

The reachability
problem in a transition system consists in deciding for a given states $s_f$
whether $s_0\to^* s_f$.
The termination problem consists in deciding whether there is an infinite
sequence $s_0\to s_1\to s_2\to\cdots$. 
The boundedness problem consists in deciding
whether the set of reachable states is finite.
For any transition system $(S,\to,s_0)$ endowed with a quasi order $\leq$ we can define the 
coverability problem, that consists in deciding, given a state $s_f$,
whether there is $s\in S$ reachable such that $s_f\leq s$.\\

\noindent\textbf{WSTS. }
A Well Structured Transition System (WSTS) is a tuple $(S,\to,s_0,\leq)$, where
$(S,\to,s_0)$ is a transition system, $\leq$ is a decidable wqo 
compatible with $\to$ (meaning that $s_1'\geq s_1\to s_2$ implies that there is $s'_2\geq s_2$
with $s'_1\to s'_2$), and so that for all $s\in S$ we can
compute $\minn(\Pred({\uparrow s}))$.
We will refer to these properties as monotonicity of $\to$ with
respect to $\leq$, and effective $Pred$-basis, respectively.\footnote{
Strictly speaking, decidability of the wqo and effective $Pred$-basis 
are not part of the definition of WSTS, but of the so called \emph{effective} WSTS. These properties are needed to ensure
decidability of coverability.}
For WSTS, the coverability and the termination problems are decidable~\cite{AbdullaCJT00,Finkel01Everywhere}.
A WSTS is said to be strict if it satisfies the following strict compatibility
condition: $s_1'>s_1\to s_2$ implies that there is $s'_2> s_2$
with $s'_1\to s'_2$. For strict WSTS, also the boundedness problem is decidable~\cite{Finkel01Everywhere}.\\

\noindent\textbf{Petri Nets. }
Next we define P/T nets in order to set our notations.
A P/T net is a tuple $N=(P,T,F)$ where
$P$ and $T$ are disjoint finite sets of places and transitions,
  respectively,
  and $F:(P\times T)\cup(T\times P)\to\nat$.
A marking $M$ of $N$ is a finite multiset of places of $N$, that
is, $M\in\mMS P$.

As usual, we denote by $t^\bullet$ and $^\bullet t$ the multisets of
postconditions and preconditions of $t$, respectively, that is,
$t^\bullet(p)=F(t,p)$ and $^\bullet t(p)=F(p,t)$.
A transition
$t$ is enabled at
marking $M$ if ${^\bullet t}\subseteq M$.
The reached state of $N$ after the firing of $t$ is
$M'=(M-{^\bullet t})+t^\bullet$.

We will write $M\too{t}M'$ if $M'$ is the reached marking after the firing of $t$
at marking $M$. We also write $M\to M'$ if there is some $t$ such that
$M\too{t}M'$. The reflexive and transitive closure of $\to$ is denoted by
$\to^*$. For a transition sequence $\tau=t_1\ldots t_m$ we will write
$M\too{\tau}M'$ to denote the consecutive firing of transitions $t_1$ to $t_m$, as expected.

\section{Petri nets with name creation}\label{sec:names}

Let us now extend P/T nets with the capability
of name management by defining $\nu$-PN. In a $\nu$-PN
names can be created, communicated  
and matched. We can use this
mechanism to deal with authentication issues~\cite{Secco},
correlation or instance isolation~\cite{DeckerW08}. We
formalize name management by replacing ordinary tokens by
distinguishable ones, thus adding colours to our nets.
We fix a set $\Id$ of names, that can be
carried by tokens of any $\nu$-PN. 
In order to handle these colors, we need matching variables
labelling the arcs of the nets, taken from a fixed set $\Var$. Moreover,
we add a primitive capable of creating new names, formalized by
means of special variables in a set $\Nu\subset\Var$, ranged by $\nu,\nu_1,\ldots$ that can only be instantiated
to fresh names.

\begin{definition}
  A $\nu$-PN is a tuple $N=(P,T,F)$, where $P$ and $T$ are finite
  disjoint sets, $F:(P\times T)\cup (T\times P)\to\mMS\Var$ is 
   such that for every $t\in T$, $\Nu\cap\pre(t)=\emptyset$ and $\post(t)\setminus\Nu\subseteq\pre(t)$, where
$\pre(t)=\bigcup_{p\in P}\supp(F(p,t))$ and $\post(t)=\bigcup_{p\in P}\supp(F(t,p))$.
\end{definition}

We also take $\Var(t)=\pre(t)\cup\post(t)$.
To avoid tedious definitions, along the paper we will consider a fixed $\nu$-PN $N=(P,T,F)$.

\begin{definition}
  A marking of $N$ is a function
  \hbox{$M:P\to\mMS\Id$}. We denote by $\Id(M)$ the set of names in $M$, that is,
$\Id(M)=\underset{p\in P}\bigcup\supp(M(p))$.
\end{definition}

We will assume a fixed initial marking $M_0$ of $N$. Like in other classes of high-order nets, transitions are fired
with respect to a mode, that chooses which tokens are taken
from preconditions and which are put in postconditions.
Given a transition $t$ of a net $N$, a mode of $t$ is an
injection $\sigma:\Var(t)\to\Id$, that instantiates each
variable to an identifier. We will use $\sigma,\sigma',\sigma_1\ldots$
to range over modes.

\begin{definition}
 Let $M$ be a marking, $t\in T$ 
 and $\sigma$ a mode for $t$.
 We say $t$ is enabled with mode
 $\sigma$ if
 for all $p\in P$, $\sigma(F(p,t))\subseteq M(p)$ and $\sigma(\nu)\notin\Id(M)$ for all $\nu\in\Nu\cap\Var(t)$. 
The reached state after the firing of $t$ with mode $\sigma$ is the marking
 $M'$, given by $$M'(p)=(M(p)-\sigma(F(p,t)))+\sigma(F(t,p))~~\mathit{for~all~}p\in P.$$
\end{definition}

We will write $M\too{t(\sigma)}M'$ to
denote that $M'$ is reached from $M$ when $t$ is fired with mode $\sigma$,
and extend the notation as done for P/T nets. In particular,
for a sequence $\tau=t_1(\sigma_1)\ldots t_m(\sigma_m)$ we will write
$M\too{\tau}M'$ to denote the consecutive firings of $t_1(\sigma_1)$ to
$t_m(\sigma_m)$. We will denote by $Reach(N)$ the set of reachable markings
of $N$. Finally, we will assume that $\bullet\in\Id$, so that we can also
have ordinary tokens in our nets.

Figure~\ref{fig:aut} depicts a simple $\nu$-PN with four places and a single
transition. This transition moves one token from $p_1$ to $q_1$ (because of variable
$x$ labelling both arcs), removes a token from $p_1$ and $p_2$ provided they
carry the same name (variable $y$ appears in both incoming arcs but it does not appear
in any outgoing arc), and two different names are created, one appears both
in $q_1$ and $q_2$ (because of variable $\nu_1\in\Nu$) and the other appears
only in $q_2$ (because of variable $\nu_2\in\Nu$).

\begin{figure}[!t]
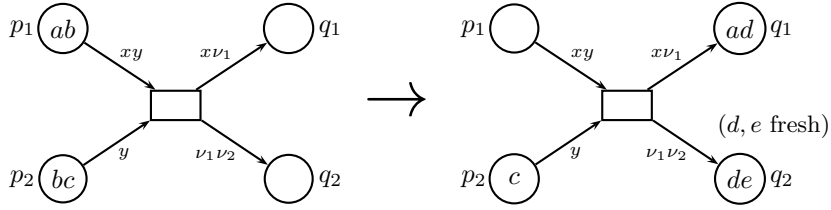

\vspace{1.3cm}
\begin{center}
 \scalebox{1}{
        \cnodeput(-4.5,1){p0}{$ab$}
        \cnodeput(-1.5,1){p1}{\white$ab$}
        \rput[B](-3,-0.1){\Rnode{ta}{\psframebox{\white$~~a$}}}
        \cnodeput(-4.5,-1){p2}{$bc$}
        \cnodeput(-1.5,-1){p3}{\white$be$}
        \uput{0}[0](-5.2,1){$p_1$}
        \uput{0}[0](-1.1,1){$q_1$}
        \uput{0}[0](-5.2,-1){$p_2$}
        \uput{0}[0](-1.1,-1){$q_2$}
        \ncline[nodesep=0pt]{->}{p0}{ta}
        \Aput[1pt]{\scriptsize{$xy$}}
        \ncline[nodesep=0pt]{->}{ta}{p1}
        \Aput[1pt]{\scriptsize{$x\nu_1$}}
        \ncline[nodesep=0pt]{->}{p2}{ta}
        \Bput[1pt]{\scriptsize{$y$}}
        \ncline[nodesep=0pt]{->}{ta}{p3}
        \Bput[1pt]{\scriptsize{$\nu_1\nu_2$}}
        \uput{0}[0](-0.5,0){\Huge$\rightarrow$}
        \cnodeput(1.5,1){p0}{\white$aa$}
        \cnodeput(4.5,1){p1}{$ad$}
        \rput[B](3,-0.1){\Rnode{ta}{\psframebox{\white$~~a$}}}
        \cnodeput(1.5,-1){p2}{$~c~$}
        \cnodeput(4.5,-1){p3}{$de$}
        \uput{0}[0](4.2,-0.3){\small{($d,e$ fresh)}}
        \uput{0}[0](0.8,1){$p_1$}
        \uput{0}[0](4.9,1){$q_1$}
        \uput{0}[0](0.8,-1){$p_2$}
        \uput{0}[0](4.9,-1){$q_2$}
        \ncline[nodesep=0pt]{->}{p0}{ta}
        \Aput[1pt]{\scriptsize{$xy$}}
        \ncline[nodesep=0pt]{->}{ta}{p1}
        \Aput[1pt]{\scriptsize{$x\nu_1$}}
        \ncline[nodesep=0pt]{->}{p2}{ta}
        \Bput[1pt]{\scriptsize{$y$}}
        \ncline[nodesep=0pt]{->}{ta}{p3}
        \Bput[1pt]{\scriptsize{$\nu_1\nu_2$}}
 }\end{center}
\vspace{0.8cm} \caption{A simple $\nu$-PN}\label{fig:aut} \vspace{-0.4cm}
\end{figure}

Notice that we demand modes to be injections (unlike in~\cite{fi08}), which
formalizes the fact that we can check for inequality. For instance, in the example
in Fig.~\ref{fig:aut} the two tokens taken from $p_1$ must carry
different names because we are labelling the arc from $p_1$ to $t$ with two
different variables, namely $x$ and $y$.
The capability of checking for inequality among all the names involved in the firing of a transition 
improves the expressive power of the model (see Fig.~\ref{fig:igualVsNoIgual}).
The problem of proving that this improvement is strict is still open.

If a $\nu$-PN has no arc labelled with 
variables from $\Nu$ then only a finite number of identifiers (those in the
initial marking) can appear in any reachable marking. It is easy to see that these 
nets can be expanded to an equivalent P/T net. In particular, reachability is decidable for 
any such net, as it is for P/T nets~\cite{Esparza94}, unlike for \hbox{$\nu$-PN~\cite{Kummer00}.}

We will work with a subclass of $\nu$-PN without weights and in which transtions
can at most create one fresh name.

\begin{definition}\label{def:normal}
 A $\nu$-PN $N=(P,T,F)$ is normal if there is $\nu\in\Nu$ such that:
 \begin{itemize}
  \item for every pair $(x,y)\in (P\cup T)\times (T\cup P)$, $|F(x,y)|\leq 1$,
  \item if $F(x,y)\cap\Nu\neq\emptyset$ then $F(x,y)=\{\nu\}$.
 \end{itemize}
\end{definition}

Every $\nu$-PN can be simulated by a normal $\nu$-PN.
Intuitively, the simulation considers for each transition several transitions that must be fired
consecutively, whenever the original net takes several tokens from
the same place. Since the firing of a transition in the original net
becomes non-atomic in the simulation, it can introduce deadlocks
(whenever the ``transaction'' cannot be accomplished).
However, it preserves all the properties we will consider in this paper. 
Therefore, from now on we will assume that $\nu$-PNs are normal when needed.

\begin{figure}[!t]
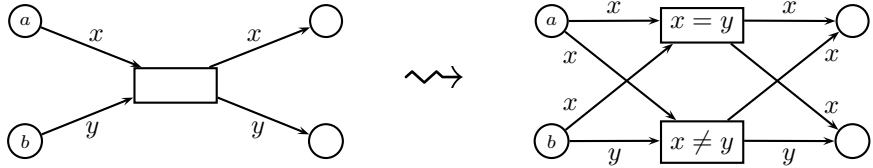

\vspace{0.8cm}
\begin{center}
 \scalebox{1}{
        \cnodeput(-6,0.8){p0}{\scriptsize$a$}
        \cnodeput(-2,0.8){p1}{\white\scriptsize$a$}
        \rput[B](-4,-0.1){\Rnode{ta}{\psframebox{\white$x=y$}}}
        \cnodeput(-6,-0.8){p2}{\scriptsize$b$}
        \cnodeput(-2,-0.8){p3}{\white\scriptsize$b$}
        \ncline[nodesep=0pt]{->}{p0}{ta}
        \Aput[1pt]{$x$}
        \ncline[nodesep=0pt]{->}{ta}{p1}
        \Aput[1pt]{$x$}
        \ncline[nodesep=0pt]{->}{p2}{ta}
        \Bput[1pt]{$y$}
        \ncline[nodesep=0pt]{->}{ta}{p3}
        \Bput[1pt]{$y$}
       \uput{0}[0](-1,0){\Huge$\rightsquigarrow$}
        \cnodeput(1,0.8){p0}{\scriptsize$a$}
        \cnodeput(5,0.8){p1}{\white\scriptsize$a$}
        \rput[B](3,0.7){\Rnode{ta}{\psframebox{$x=y$}}}
        \rput[B](3,-0.9){\Rnode{ta2}{\psframebox{$x\neq y$}}}
        \cnodeput(1,-0.8){p2}{\scriptsize$b$}
        \cnodeput(5,-0.8){p3}{\white\scriptsize$b$}
        \ncline[nodesep=0pt]{->}{p0}{ta}
        \Aput[2pt]{$x$}
        \ncline[nodesep=0pt]{->}{ta}{p1}
        \Aput[2pt]{$x$}
        \ncline[nodesep=0pt]{->}{p2}{ta}
        \aput[2pt](0.15){$x$}
        \ncline[nodesep=0pt]{->}{ta}{p3}
        \aput[1pt](0.85){$x$}
        \ncline[nodesep=0pt]{->}{p0}{ta2}
        \bput[2pt](0.15){$x$}
        \ncline[nodesep=0pt]{->}{ta2}{p1}
        \bput[1pt](0.85){$x$}
        \ncline[nodesep=0pt]{->}{p2}{ta2}
        \Bput[2pt]{$y$}
        \ncline[nodesep=0pt]{->}{ta2}{p3}
        \Bput[2pt]{$y$}
    }\end{center}
\vspace{0.5cm} \caption{The net on the left cannot check for inequalities (it can fire its transition when $a=b$ or $a\neq b$). The net on the right
can fire the transition in the top when $a=b$, and the one in the bottom when $a\neq b$.}\label{fig:igualVsNoIgual}\vspace{0cm}
\end{figure}

\section{Undecidability of reachability for $\nu$-PN}\label{sec:reach}

Let us now prove that reachability is undecidable for $\nu$-PN.
In~\cite{Kummer00} (and independently in~\cite{Secco}) undecidability
of reachability is proved by reducing reachability of the final state
with all the counters containing zero in Minsky machines to reachability in $\nu$-PN. 
In this section we prove that same result in a more simple way, by reducing reachability
of inhibitor nets (that allow to check for zero) to reachability in $\nu$-PN.

  An inhibitor net is a tuple
  $N=(P,T,F,F_{in})$, where $P$ and $T$ are disjoint sets of places and transitions, respectively,
  $F\subseteq (P\times T)\cup (T\times P)$, and
  $F_{in}\subseteq P\times T$.
Pairs in $F_{in}$ are inhibitor arcs. For a transition $t\in T$
we write $^\bullet t=\{p\in P\mid (p,t)\in F\}$, $t^\bullet=\{p\in P\mid (t,p)\in F\}$ and $^i t=\{p\in
P\mid (p,t)\in F_{in}\}$.
In figures we will draw a circle instead of an arrow to indicate that an arc
is an inhibitor arc.

  A \emph{marking} of an inhibitor net $N$ is a multiset of places of $N$. A transition $t$
  of $N$ is enabled if $M(p)>0$ for all $p\in {^\bullet t}$ and $M(p)=0$ for all $p\in {^i t}$.
  In that case $t$ can be fired, producing $M'=(M- {^\bullet t})+t^\bullet$.

\begin{proposition}
 Reachability is undecidable for $\nu$-PN.
\end{proposition}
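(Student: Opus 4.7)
The plan is to reduce reachability in Petri nets with inhibitor arcs, which is undecidable, to reachability in $\nu$-PN. Given an inhibitor net $N=(P,T,F,F_{in})$ with initial and target markings $M_0,M_f$, I construct a $\nu$-PN $N'$ together with markings $M_0',M_f'$ such that $M_f$ is reachable from $M_0$ in $N$ iff $M_f'$ is reachable from $M_0'$ in $N'$. The key idea is to use fresh name generation to simulate the zero test: for every place $p\in P$ introduce a companion control place $c_p$ holding a single token whose name represents the \emph{current identifier} for $p$. All data tokens in $p$ are intended to carry this identifier, and a zero-test on $p$ is simulated by refreshing the content of $c_p$ to a new fresh name, which makes any token that previously resided in $p$ permanently stranded.

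Concretely, the places of $N'$ are $P\cup\{c_p\mid p\in P\}$. For each $t\in T$ the simulating transition $t'$ uses a distinct variable $x_p$ for every place $p$ touched by $t$, with read arcs $c_p\to t'\to c_p$ labeled $x_p$ (so that the identifier is inspected but preserved) and data arcs on $p$ labeled $x_p$ whenever $p$ is a normal precondition or postcondition of $t$. For each inhibitor precondition $r\in{^i t}$ the read arc on $c_r$ is instead split into a consumption arc $c_r\to t'$ labeled $x_r$ and a production arc $t'\to c_r$ labeled $\nu_r\in\Nu$, refreshing the identifier of $r$; crucially, no data arc connects $t'$ and $r$ itself. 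The initial marking $M_0'$ places pairwise distinct identifiers $a_p$ into the $c_p$'s and $M_0(p)$ copies of $a_p$ into each $p$, while the target marking $M_f'$ (specified up to alpha-renaming) requires each $c_p$ to hold a single pairwise distinct identifier $b_p$ and each $p$ to hold exactly $M_f(p)$ copies of $b_p$, with no other tokens anywhere.

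For correctness, the forward direction simulates every step of $N$ step-by-step in $N'$: since each zero test of $N$ fires only when the tested place is empty, refreshes strand no tokens, the invariant \emph{every data token in $p$ carries the current identifier of $p$} is preserved, and the simulation ends in a marking $\alpha$-equivalent to $M_f'$. Conversely, an old-identifier token left in $p$ after a refresh can never be consumed again, because any subsequent transition with a data arc on $p$ forces the variable $x_p$ to agree with the current content of $c_p$; such a token therefore persists in the final marking, preventing the uniform-identifier, exact-multiplicity shape of $M_f'$ from ever being reached. Hence a run reaching $M_f'$ cheats on no zero test and corresponds to a legal run of $N$ reaching $M_f$. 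The main subtlety to verify is injectivity of modes: starting from pairwise distinct $a_p$'s and noting that every refresh introduces a name fresh with respect to the entire marking, the contents of the $c_p$ places stay pairwise distinct throughout, so an injective mode always exists for every firing of each $t'$.
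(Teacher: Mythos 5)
Your construction is essentially the paper's own proof: the same companion place $\bar p$ (your $c_p$) carrying the ``current identifier'' of $p$, the same read arcs labelled $x_p$ forcing tokens in $p$ to match $\bar p$, the same replacement of the identifier by a $\nu$-created fresh name on each inhibitor arc, the same encoding of the initial marking, and the same garbage argument for the reverse direction; your remark on injectivity of modes is a welcome extra check. The one divergence is that you specify the target only up to $\alpha$-renaming, so strictly speaking your reduction establishes undecidability of reachability modulo $\equiv_\alpha$; to conclude undecidability of reachability as literally defined you should add the routine bridge that a decision procedure for exact reachability yields one for $\alpha$-reachability (by symmetry of fresh-name choices it suffices to test the finitely many representatives of the target whose names lie in $\Id(M_0)$ together with a fixed set of $|\Id(M_f)|$ new names), or alternatively pin the relevant names with isolated places as the paper does for coverability. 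Note that your caution here is in fact warranted: the paper's statement that $M^*$ itself (with the original names $a_p$ in every $\bar p$) is reachable is slightly too strong, since a run performing exactly one zero test on $p$ can never restore $a_p$ to $\bar p$ (the fresh name must differ from $a_p$, which is present at that moment), so the equivalence really does have to be read up to $\alpha$-equivalence, exactly as you phrase it.
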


\begin{proof}
 Given an inhibitor net $N=(P,T,F,F_{in})$ we build a \hbox{$\nu$-PN} $N^*=(P\cup\bar P,T,F^*)$ that simulates it as follows:
 \begin{itemize}
  \item If $(p,t)\in F$ then $F^*(p,t)=F^*(\bar p,t)=F^*(\bar p,t)=\{x_p\}$ 
        (and analogously for $(t,p)\in F)$,
  \item If $(p,t)\in F_{in}$ then $F^*(\bar p,t)=\{x_p\}$ and $F^*(t,\bar p)=\{\nu\}$.
  \item $F^*(x,y)=\emptyset$ elsewhere.
\end{itemize}

Moreover, if $M_0$ is the initial marking of $N$, we consider a different identifier
$a_p$ for each place $p$ of $N$. Then, we define the initial marking of $N^*$ as
$M^*_0(\bar p)=\{a_p\}$ and $M^*_0(p)=\{a_p,\overset{M_0(p)}\ldots,a_p\}$, for each $p\in P$.

 Intuitively, for each place $p$ of $N$ we consider a new place $\bar p$ in $N^*$. 
 The construction of $N^*$ is such that $\bar p$ contains a single
 token at any time. The firing
 of any transition ensures that the token being used in $p$ coincides with that
 in $\bar p$. Every time a transition checks the emptyness of a place $p$, the content
 of $\bar p$ is replaced by a fresh token, so that no token remaining in $p$ can be
 used. In this way, our simulation introduces some garbage tokens whenever it cheats, that once 
 become garbage, always stay like that. Moreover, notice that any marking of $N^*$ of the
 form $M^*$ for some marking $M$ of $N$ does not contain any garbage, so that it comes from
 a correct simulation. 
 Fig.~\ref{fig:inh} depicts a simple inhibitor net and its simulation.
Then $M$ is reachable in $N$ from $M_0$ if and only if $M^*$ is reachable in $N^*$ from $M^*_0$. 
Thus, we have reduced reachability in inhibitor nets, which is undecidable~\cite{Esparza94}, to reachability in $\nu$-PN.
\end{proof}

\begin{figure}[!t]
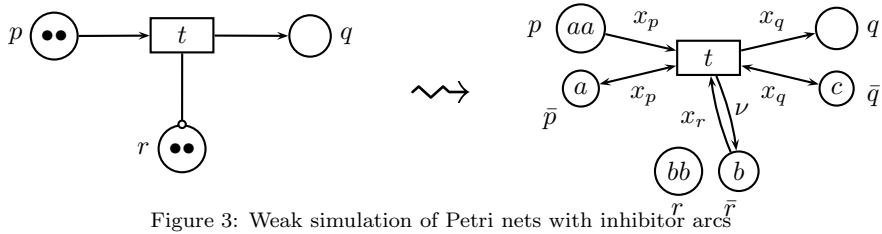

\vspace{0.5cm}
\begin{center}
 \scalebox{1}{
        \cnodeput(-5.2,0.3){p}{$\bullet\bullet$}
        \cnodeput(-3.5,-1.2){r}{$\bullet\bullet$}
        \cnodeput(-1.8,0.3){q}{\white$k$}
        \rput[B](-3.5,0.2){\Rnode{t}{\psframebox{$~~t~~$}}}
        \ncline[nodesep=0pt]{->}{p}{t}
        \ncline[nodesep=0pt]{o-}{r}{t}
        \ncline[nodesep=0pt]{->}{t}{q}
        \uput{0}[0](-5.8,0.3){$p$}
        \uput{0}[0](-1.4,0.3){$q$}
        \uput{0}[0](-4.1,-1.2){$r$}
	\uput{0}[0](-0.5,-0.5){\Huge$\rightsquigarrow$}
        \cnodeput(1.8,0.4){p}{$aa$}
        \cnodeput(1.8,-0.4){pn}{$a$}
        \cnodeput(3.1,-1.5){r}{$bb$}
        \cnodeput(3.9,-1.5){rn}{$b$}
        \cnodeput(5.2,0.4){q}{\white$k$}
        \cnodeput(5.2,-0.4){qn}{$c$}
        \rput[B](3.5,-0.1){\Rnode{t}{\psframebox{$~~t~~$}}}
        \ncline[nodesep=0pt]{->}{p}{t}
        \Aput[0.1]{$x_p$}
        \ncline[nodesep=0pt]{<->}{pn}{t}
        \Bput[0.1]{$x_p$}
        \ncarc[nodesep=0pt]{->}{rn}{t}
        \Aput[0.1]{$x_r$}
        \ncarc[nodesep=0pt]{->}{t}{rn}
        \Aput[1pt]{$\nu$}
        \ncline[nodesep=0pt]{->}{t}{q}
        \Aput[0.1]{$x_q$}
        \ncline[nodesep=0pt]{<->}{t}{qn}
        \Bput[0.1]{$x_q$}
        \uput{0}[0](1.1,0.4){$p$}
        \uput{0}[0](1.3,-0.8){$\bar p$}
        \uput{0}[0](3,-2){$r$}
	\uput{0}[0](3.7,-2){$\bar r$}
	\uput{0}[0](5.6,0.4){$q$}
	\uput{0}[0](5.6,-0.5){$\bar q$}
    }
 \end{center}
\vspace{1.2cm}\caption{Weak simulation of Petri nets with inhibitor arcs}\label{fig:inh}\vspace{-0.3cm}
\end{figure}

\section{Strict well structuredness of $\nu$-PN}\label{sec:coverability}

In this section we prove that the transition sytem generated by a $\nu$-PN
is strictly well structured~\cite{AbdullaCJT00,Finkel01Everywhere}. 
This will imply decidability of coverability, boundedness and termination.
For that purpose, we can proceed following the next steps.
In the first place, we need to
define an order in the set of configurations, markings in
our case, that induces the property of coverability. This order
must be a decidable wqo. Then
we must prove that this order is strictly monotonic with respect to the transition relation.
Finally, we have to prove that it has effective $Pred$-basis.

\subsection{Defining the order}

One could think that the order we are 
interested in for $\nu$-PN is the following:

\begin{center}
  $M_1\sqsubseteq M_2\Leftrightarrow M_1(p)\subseteq M_2(p)$ for all $p\in P$
\end{center}

This order is not a
well quasi-order: it suffices to consider a $\nu$-PN with a single place $p$ 
and a sequence of pairwise
different identifiers $(a_i)_{i=1}^\infty$, ant define 
$M_i(p)=\{a_i\}$ for all $i=1,2,\ldots$ which trivially 
satisfies that for all $i<j$, $M_i\not\sqsubseteq M_j$.

However, this order is too restrictive, since it does not take
into account the abstract nature of pure names. Indeed, 
whenever a new name is created, actually any other
fresh name could have been created. Therefore, reachability
(or coverability) of a given marking is equivalent to reachability
(or coverability) of any marking produced after consistently
renaming the new names in it. For homogeneity, we will suppose
that we can rename every name, even those appearing in the
initial marking (which, after all, are a fixed number of names).
To capture these intuitions, 
we identify markings up to renaming of names.

\begin{definition}
  Given two markings $M$ and $M'$ we say that they are \hbox{$\alpha$-equivalent,}
  and we write $M\equiv_\alpha M'$,
  if there is a bijection $\iota:\Id(M)\to\Id(M')$ such that
  $M'(p)(\iota(a))=M(p)(a)$ for all $p\in P$ and $a\in\Id(M)$.
\end{definition}

We will write $M\equiv_\iota M'$ to stress the use of the particular mapping
$\iota$ in the previous definition. Moreover, for a marking $M$ and
set of identifiers $A$, any bijection $\iota:\Id(M)\to A$ defines a
marking that we denote as $\iota(M)$, given by $\iota(M)(p)(\iota(a))=M(p)(a)$,
which is \hbox{$\alpha$-equivalent} to $M$.

\begin{proposition}\label{prop:alpha}
  The behavior of $\nu$-PNs is invariant under
  \hbox{$\alpha$-conversion.} More specifically, let $M_1\too{t(\sigma)}M'_1$:
  \begin{itemize}
   \item If $M_1\equiv_\alpha M_2$
  then there is $M'_2$ and $\sigma'$
  such that $M'_1\equiv_\alpha M'_2$ and $M_2\too{t(\sigma')}M'_2$.
   \item If $M'_1\equiv_\alpha M'_2$
  then there is $M_2$ and $\sigma'$
  such that $M_1\equiv_\alpha M_2$ and $M_2\too{t(\sigma')}M'_2$.
  \end{itemize}

\end{proposition}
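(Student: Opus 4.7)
The plan is to prove both bullets simultaneously by constructing, from the bijection $\iota$ witnessing the $\alpha$-equivalence on one side of the firing, a bijection $\iota^*$ extending it to the names created during the step, and then taking $\sigma'=\iota^*\circ\sigma$. The whole argument reduces to checking that $\iota^*$ is compatible with all the ingredients of the firing rule.

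\medskip

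\noindent\textbf{Forward direction.} Assume $M_1\too{t(\sigma)}M'_1$ and $M_1\equiv_\iota M_2$. First I would extend $\iota$ to $\iota^*$ as follows. The domain of $\iota$ is $\Id(M_1)$; since $\Nu\cap\pre(t)=\emptyset$, for every $x\in\pre(t)$ one has $\sigma(x)\in\Id(M_1)$ already, so $\iota^*$ needs to be defined only on the values $\sigma(\nu)$ with $\nu\in\Nu\cap\Var(t)$. For each such $\nu$, choose a name $b_\nu\notin\Id(M_2)$, taking the $b_\nu$'s pairwise distinct; set $\iota^*(\sigma(\nu))=b_\nu$. Because the $\sigma(\nu)$'s lie outside $\Id(M_1)$ (freshness at $M_1$) and the $b_\nu$'s are chosen outside $\Id(M_2)$, this extension is a bijection from $\Id(M_1)\cup\sigma(\Nu\cap\Var(t))$ onto $\Id(M_2)\cup\{b_\nu\}_\nu$, and in particular the composition $\sigma'=\iota^*\circ\sigma$ is an injection on $\Var(t)$, hence a legal mode.

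\medskip

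\noindent\textbf{Enabledness and successor.} I would then verify the two enabling conditions at $M_2$ with mode $\sigma'$. For every place $p$, $\sigma'(F(p,t))=\iota^*(\sigma(F(p,t)))=\iota(\sigma(F(p,t)))\subseteq\iota(M_1(p))=M_2(p)$, using that $\sigma(F(p,t))\subseteq\Id(M_1)$ so $\iota^*$ agrees with $\iota$ there, together with $\alpha$-equivalence. For every $\nu\in\Nu\cap\Var(t)$, $\sigma'(\nu)=b_\nu\notin\Id(M_2)$ by choice. Defining $M'_2=\iota^*(M'_1)$, I would then check place by place that $M'_2(p)=(M_2(p)-\sigma'(F(p,t)))+\sigma'(F(t,p))$: applying $\iota^*$ to the corresponding identity for $M_1,M'_1$ and using that $\iota^*$ is a bijection (so it commutes with the multiset operations $+$ and $-$) gives exactly this equation. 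Hence $M_2\too{t(\sigma')}M'_2$ and $M'_1\equiv_{\iota^*}M'_2$.

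\medskip

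\noindent\textbf{Backward direction and obstacle.} The second bullet is entirely symmetric: if $M'_1\equiv_{\iota}M'_2$, restrict $\iota$ to $\Id(M_1)\subseteq\Id(M'_1)$ (the identifiers not created by the step) to obtain a bijection $\iota_0$ onto some subset of $\Id(M'_2)$; a routine check, again using that $\Nu\cap\pre(t)=\emptyset$ and $\post(t)\setminus\Nu\subseteq\pre(t)$, shows that this image is exactly $\Id(M_2)$ for the marking $M_2:=\iota_0(M_1)$. Then $\sigma':=\iota\circ\sigma$ is a mode enabling $t$ at $M_2$ with successor $M'_2$, by the same computation as above.

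\medskip

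The only delicate point, and the one I would write out carefully, is that $\iota^*$ really is a bijection on the relevant sets of names: this is where the constraints $\Nu\cap\pre(t)=\emptyset$ and $\sigma(\nu)\notin\Id(M_1)$ (for the fresh-name condition) are used, together with the fact that $\sigma$ itself is required to be an injection. Once this bookkeeping is set up, both directions reduce to pushing a finite bijection through the single-step firing rule.
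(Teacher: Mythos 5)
Your forward direction is correct and is essentially the paper's own argument: extend the renaming bijection to the freshly created names and take $\sigma'$ to be the composition with $\sigma$; if anything, you are more explicit than the paper about why the extension stays injective and why the freshness condition holds at $M_2$.

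The backward direction, however, contains a genuine gap. You ``restrict $\iota$ to $\Id(M_1)\subseteq\Id(M'_1)$'', but this inclusion is false in general: a firing can remove \emph{every} token carrying some name, namely when a variable occurs in $\pre(t)$ but not in $\post(t)$ (in the example of Fig.~\ref{fig:aut}, the name $b$ matched by $y$ occurs in $M_1$ but not in $M'_1$). For a name $a\in A:=\Id(M_1)\setminus\Id(M'_1)$ the given bijection $\iota:\Id(M'_1)\to\Id(M'_2)$ is simply undefined, so neither $\iota_0(M_1)$ nor $\sigma'=\iota\circ\sigma$ is well defined (there is $x\in\pre(t)$ with $\sigma(x)=a$). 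Hence the second bullet is \emph{not} ``entirely symmetric'' to the first: going backwards you must invent images for the names destroyed by the step, just as going forwards you invented images for the names created by it. This is what the paper's proof does, extending the bijection on $\Id(M'_1)$ to the set $A$; to make that extension fully correct the images chosen for $A$ must be pairwise distinct and avoid both $\Id(M'_2)$ and the images of the created names, since otherwise the resulting map may fail to be injective and, more importantly, the freshness condition $\sigma'(\nu)\notin\Id(M_2)$ for $\nu\in\Nu\cap\Var(t)$ could be violated, disabling the firing of $t(\sigma')$ at $M_2$. Once the vanished names are handled in this way, the verification goes through exactly as in your forward case, because the extended bijection commutes with the multiset operations used in the firing rule.
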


\begin{proof} Let $A=\Id(M_1)\setminus\Id(M'_1)$ and $B$ the set of names created by $t(\sigma)$.
 Then, $\Id(M'_1)=(\Id(M_1)\setminus A)\cup B$. Notice that $B\subseteq\{b\}$, for some $b\in\Id$,
 assuming $N$ is normal.
  \begin{itemize}
   \item Assume $M_1\equiv_\iota M_2$ and let $\sigma'=\iota\circ\sigma$.
  Transition $t$ can be fired from $M_2$ with mode $\sigma'$,
  obtaining $M'_2$ with $\Id(M'_2)=(\Id(M'_1)\setminus\iota(A))\cup B'$ for some $B'$ of the same
  cardinality than $B$. We define $\iota'$ by extending $\iota$ to $B$ so that $\iota(B)=B'$, which verifies
  $M_2\equiv_{\iota'}M'_2$.
  \item Assume now that $M_1\equiv_{\iota'} M_2$ and let us define $\iota:\Id(M_1)\to\Id(M'_2)\cup A$ by
  $\iota(a)=\iota'(a)$ if $a\in\Id(M'_1)$, and $\iota(a)=a$ if $a\in A$. Then
  $M_2=\iota(M_1)$ and $\sigma'=\iota\circ\sigma$ satisfy $M_1\equiv_{\iota'}M_2$ and $M_2\too{t(\sigma')}M'_2$.
   \end{itemize}
\end{proof}

For instance, if we represent a marking $M$ of the net
in Fig.~\ref{fig:aut} by a tuple
$(M(p_1),M(p_2),M(p_3),M(p_4))$, then $M_1=(\{a,b\},\{b,c\},\emptyset,\emptyset)$ and
$M_2=(\{a,c\},\{b,c\},\emptyset,\emptyset)$ are two
\hbox{$\alpha$-equivalent} markings of that $\nu$-PN. 
Indeed, $M_1\equiv_\iota M_2$ with $\iota(a)=a$, $\iota(b)=c$
and $\iota(c)=b$.
$M_1$ can evolve to
the marking $M'_1=(\emptyset,\{c\},\{a,d\},\{d,e\})$ when it fires $t$ and 
$M_2$ can evolve to
$M'_2=(\emptyset,\{b\},\{a,e\},\{d,e\})$.
Notice that also $M'_1\equiv_\alpha M'_2$.

Let us now define the order we are interested in, by modifying the order $\sqsubseteq$ between markings with the help of the
\hbox{$\alpha$-equivalence} relation $\equiv_\alpha$.

\begin{definition}\label{def:orderalpha}
  Let $M_1$ and $M_2$ be markings of $N$. We will write \hbox{$M_1\sqsubseteq_\alpha M_2$}
  if there is a marking
  $M'_1$ such that $M'_1\equiv_\alpha M_1$ and $M'_1\sqsubseteq M_2$.
\end{definition}

Then, $M_1\sqsubseteq_\alpha M_2$ when there is $\iota$ such that
$M_1\equiv_\iota M'_1\sqsubseteq M_2$, or equivalently, when $\iota(M_1)\sqsubseteq M_2$. 
We will write $M_1\sqsubseteq_\iota M_2$ to emphasize on the use of 
$\iota$. Clearly, $\sqsubseteq_\alpha$ is a decidable quasi order. 
Moreover, the kernel
of $\sqsubseteq_\alpha$ is $\equiv_\alpha$, that is, 
$M_1\sqsubseteq_\alpha M_2$ and $M_2\sqsubseteq_\alpha M_1$ iff $M_1\equiv_\alpha M_2$.

\subsection{$\sqsubseteq_\alpha$ is a wqo}

We will now see that the set of markings, ordered by $\sqsubseteq_\alpha$, is a wqo.
In particular, notice that the 
counterexample we saw to prove that $\sqsubseteq$ is not a
wqo is no longer valid, since all those markings
are $\alpha$-equivalent.
In order to prove that $\sqsubseteq_\alpha$
is a wqo we map it to a multiset order
which is known to be a wqo.

A marking is a mapping $M:P\to\Id\to\nat$ that says, for a given
place $p$ and an identifier $a$, how many times the token $a$ 
can be found in place $p$. However, we can also currify those
mappings as $M:\Id\to P\to\nat$. Since 
the behavior of a net is invariant under renaming,
as we proved in Prop.~\ref{prop:alpha}, we can represent
markings (modulo $\equiv_\alpha$) as multisets in $\mMS{(P\to\nat)}$, that is, in $\mMS{(\mMS P)}$.

In this way, we represent markings by means of multisets,
with a cardinality that equals the number of different
identifiers appearing in it.

As an example, let us consider
a net with only two places $p_1$ and $p_2$, and a marking
$M$ such that $M(p_1)=\{a,a,b,c\}$ and
$M(p_2)=\{b,c\}$. We can represent that marking by
the multiset of cardinality 3, since there are 3
different identifiers in $M$, namely by the multiset
$\{\{p_1,p_1\},\{p_1,p_2\},\{p_1,p_2\}\}$, where the
multiset
$\{p_1,p_1\}$ represents identifier $a$, one of the
two multisets $\{p_1,p_2\}$ represents $b$ and the other
$\{p_1,p_2\}$ represents $c$.
Let us see it formally:

\begin{definition}\label{def:nf1}
   For a marking $M$ of $N$, we define 
   $M_a\in\mMS P$ by
   $M_a(p)=M(p)(a)$ and $\overline{M}=\{M_a\mid
   a\in\Id(M)\}\in\mMS{(\mMS P)}$. 
\end{definition}

Let us denote by $\ll$ the canonic order in $\mMS{(\mMS P)}$. It
is well known that $\ll$ is a wqo.
Moreover, it coincides with $\sqsubseteq_\alpha$,
as we prove next.

\begin{lemma}\label{equivorders}
    Let $M_1$ and $M_2$ be two markings. Then \hbox{$M_1\sqsubseteq_\alpha
M_2$} iff $\overline{M}_1\ll\overline{M}_2$.
\end{lemma}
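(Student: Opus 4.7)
The plan is to unfold both relations and observe that each is witnessed by essentially the same kind of data: an injection from $\Id(M_1)$ to $\Id(M_2)$ that respects, on a per-name basis, the pointwise inclusion of the columns $M_a \in \mMS P$. The proof then reduces to converting between a renaming witness (for $\sqsubseteq_\alpha$) and a multiset-order witness (for $\ll$).

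For the forward direction, I would start from $M_1 \sqsubseteq_\alpha M_2$. By Definition~\ref{def:orderalpha} there exist $M'_1$ and a bijection $\iota : \Id(M_1)\to\Id(M'_1)$ with $M_1 \equiv_\iota M'_1$ and $M'_1 \sqsubseteq M_2$. Since $M'_1 \sqsubseteq M_2$, every name occurring in $M'_1$ also occurs in $M_2$, so $\iota$ can be viewed as an injection $\Id(M_1)\to\Id(M_2)$. For each $a\in\Id(M_1)$ and $p\in P$, $(M_1)_a(p) = M_1(p)(a) = M'_1(p)(\iota(a)) \leq M_2(p)(\iota(a)) = (M_2)_{\iota(a)}(p)$, hence $(M_1)_a \subseteq (M_2)_{\iota(a)}$. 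This is exactly what the multiset order $\ll$ on $\mMS{(\mMS P)}$ demands (via the injection $\iota$ between the index sets $\Id(M_1)$ and $\Id(M_2)$), so $\overline{M}_1 \ll \overline{M}_2$.

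For the converse, assume $\overline{M}_1 \ll \overline{M}_2$. By definition of the multiset lifting we obtain an injection $\kappa : \Id(M_1) \to \Id(M_2)$ such that $(M_1)_a \subseteq (M_2)_{\kappa(a)}$ for every $a\in\Id(M_1)$. I would then define $M'_1$ by renaming $M_1$ along $\kappa$: set $M'_1(p)(b) = M_1(p)(\kappa^{-1}(b))$ if $b\in\kappa(\Id(M_1))$ and $M'_1(p)(b)=0$ otherwise. Directly from this definition $M_1 \equiv_\kappa M'_1$, so $M'_1 \equiv_\alpha M_1$. Moreover, for $b = \kappa(a)\in\Id(M'_1)$ one has $M'_1(p)(b) = (M_1)_a(p) \leq (M_2)_{\kappa(a)}(p) = M_2(p)(b)$, and elsewhere $M'_1(p)(b)=0$, so $M'_1 \sqsubseteq M_2$, yielding $M_1 \sqsubseteq_\alpha M_2$.

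The only subtle point is the case where distinct names of $M_1$ share the same column (i.e.\ $(M_1)_a = (M_1)_{a'}$), which makes $\overline{M}_1$ a genuine multiset with repetitions. This is exactly what the injection-based definition of the multiset extension $\sqsubseteq$ (applied to $\mMS P$) handles, and it matches the fact that $\iota$ in the definition of $\equiv_\alpha$ is a bijection on names, not on columns. So the bookkeeping is completely symmetric and there is no real obstacle: the argument is essentially the observation that currying $M : P \to \mMS\Id$ into $M : \Id \to \mMS P$ and then forgetting the names (keeping only the multiset of columns) is the quotient by $\equiv_\alpha$, and that $\sqsubseteq$ on the uncurried side corresponds, after this quotient, to $\ll$ on the curried side.
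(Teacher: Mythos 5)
Your proof is correct and follows essentially the same route as the paper's: both directions simply translate between the renaming injection $\iota$ witnessing $\sqsubseteq_\alpha$ and the index injection witnessing $\ll$, using that each name contributes exactly one occurrence (its column $M_a$) to $\overline{M}$. Your explicit construction of $M'_1=\kappa(M_1)$ in the converse and your remark about repeated columns only spell out bookkeeping the paper leaves implicit.
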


\begin{proof}
 Let $\oM_1=\{A_1,\ldots,A_n\}$ and $\oM_2=\{B_1,\ldots,B_n\}$ with $A_i=M_i^{a_i}$ and
 $B_j=M_2^{b_j}$. If $M_1\sqsubseteq_\iota M_2$ then define $h(i)$ such that $B_{h(i)}=M_2^{\iota(a_i)}$.
 Then $A_i(p)=M_1(p)(a_i)\leq M_2(p)(\iota(a_i))=B_{h(i)}(p)$, so that $A_i\subseteq B_{h(i)}$ and
 therefore $\overline{M}_1\ll\overline{M}_2$.

 Conversely, since $\oM_1\ll\oM_2$, there is $h:\{1,\ldots,n\}\to\{1,\ldots,m\}$
 such that $A_i\subseteq B_{h(i)}$. Let us define $\iota:\Id(M_1)\to\Id(M_2)$ by
 $\iota(a_i)=b_{h(i)}$. Then we have $M_1(p)(a_i)=M_1^{a_i}(p)\leq M_2^{b_{h(i)}}(p)=M_2^{\iota(a_i)}(p)=M_2(p)(\iota(a_i))$.
 Therefore, $M_1(p)(a)\leq M_2(p)(\iota(a))$ for all $a\in\Id(M_1)$ and the thesis follows. 
\end{proof}

Finally, we can conclude that the order $\sqsubseteq_\alpha$ is,
indeed, a wqo.

\begin{proposition}
  $\sqsubseteq_\alpha$ is a wqo.
\end{proposition}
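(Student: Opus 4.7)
The plan is to reduce the statement to a classical multiset-ordering wqo result via Lemma~\ref{equivorders}. That lemma establishes that $M_1 \sqsubseteq_\alpha M_2$ if and only if $\oM_1 \ll \oM_2$, so the map $M \mapsto \oM$ both preserves and reflects the order. Consequently, given any infinite sequence of markings $M_0, M_1, M_2, \ldots$, the associated sequence $\oM_0, \oM_1, \oM_2, \ldots$ lives in $\mMS{(\mMS P)}$, and any pair $i < j$ with $\oM_i \ll \oM_j$ yields $M_i \sqsubseteq_\alpha M_j$. It therefore suffices to show that $\ll$ is a wqo on $\mMS{(\mMS P)}$.

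For that, I would apply two standard wqo-preservation results in sequence. First, since $P$ is finite, $(\mMS P, \subseteq)$ is isomorphic to $(\nat^{|P|}, \leq)$ with componentwise order, which is a wqo by Dickson's lemma. Second, the induced multiset order on $\mMS{A}$ introduced in Section~\ref{sec:basic} preserves well-quasi-orderedness: if $(A, \leq)$ is a wqo, then so is $(\mMS{A}, \sqsubseteq)$ (a standard consequence of Higman's lemma, obtained by linearising the injection $\iota$ of Definition via any ordering of the multiset elements). Instantiating this with $A = \mMS P$ and $\leq\, =\, \subseteq$ immediately gives that $\ll$ is a wqo on $\mMS{(\mMS P)}$.

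Combining the two observations closes the argument: the $\oM$-translation turns any hypothetical bad sequence for $\sqsubseteq_\alpha$ into a bad sequence for $\ll$, which cannot exist. There is no real obstacle here; the work was already done in proving Lemma~\ref{equivorders}, whose purpose was precisely to transport the problem into a setting where standard wqo technology applies directly. The only point requiring mild care is that the reduction uses the reflecting direction of Lemma~\ref{equivorders}, which is available because the lemma is an equivalence rather than a mere implication.
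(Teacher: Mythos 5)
Your proof is correct and follows essentially the same route as the paper: map the sequence through $M\mapsto\oM$, invoke the right-to-left direction of Lemma~\ref{equivorders}, and conclude from the fact that $\ll$ is a wqo on $\mMS{(\mMS P)}$. The only difference is that you spell out (via Dickson's and Higman's lemmas) the fact that $\ll$ is a wqo, which the paper simply cites as well known.
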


\begin{proof}
  Let $M_0,M_1,M_2,\ldots$ be an infinite sequence of markings. 
 Let us consider the sequence
 $\overline{M_0},\overline{M_1},\overline{M_2},\ldots$.
  Since
$\ll$ is a wqo, there are two indices $i<j$ such that
$\overline{M_i}\ll\overline{M_j}$. By Lemma~\ref{equivorders}
we have that $M_i\sqsubseteq_\alpha M_j$, from which the thesis
 follows.
\end{proof}

\subsection{Strict monotonicity}

Now let us see the next condition for strict well-structuredness, namely strict monotonicity
of the firing relation with respect to $\sqsubseteq_\alpha$. 
As a first step, let us see it for $\sqsubseteq$.

\begin{lemma}\label{lema:mono}
  The firing relation of $\nu$-PN is
  strictly monotonic with respect to~$\sqsubseteq$.
\end{lemma}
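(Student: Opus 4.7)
My plan is to reduce the claim to a direct componentwise multiset computation, after first arranging that the fresh names used by the firing do not clash with those appearing in the larger marking. Given $M_1 \sqsubseteq M_1'$ and a firing $M_1 \too{t(\sigma)} M_2$, the goal is to produce a firing $M_1' \too{t(\sigma)} M_2'$ with $M_2 \sqsubseteq M_2'$, strictly so when $M_1 \sqsubsetneq M_1'$.

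The first step is to make $\sigma$ jointly enabled at $M_1$ and $M_1'$. By the definition of $\nu$-PN, $\Nu \cap \pre(t) = \emptyset$, so only non-fresh variables appear on the precondition arcs of $t$, and therefore the containment $M_1 \sqsubseteq M_1'$ is inherited as $\sigma(F(p,t)) \subseteq M_1(p) \subseteq M_1'(p)$ for every place $p$. The only obstruction is the freshness side condition $\sigma(\nu) \notin \Id(M_1')$ for $\nu \in \Nu \cap \Var(t)$, which may fail if $\sigma$ happens to map some $\nu$ into the extra part $\Id(M_1')\setminus\Id(M_1)$. I would invoke Proposition~\ref{prop:alpha} to relabel the values of $\sigma$ on $\Nu$ to lie in $\Id \setminus \Id(M_1')$, which is always possible since $\Id$ is infinite while $\Id(M_1')$ is finite; the resulting firing from $M_1$ produces a marking $\alpha$-equivalent to $M_2$, with which I continue to work (and still call $M_2$) throughout the rest of the argument.

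Once $\sigma$ fires from both markings, the remainder is arithmetic. Let $D(p) = M_1'(p) - M_1(p)$, which is a well-defined (possibly empty) multiset since $M_1 \sqsubseteq M_1'$. Applying the firing rule at $M_1'$ gives, for each $p \in P$,
\[ M_2'(p) = M_1'(p) - \sigma(F(p,t)) + \sigma(F(t,p)) = \bigl(M_1(p) - \sigma(F(p,t)) + \sigma(F(t,p))\bigr) + D(p) = M_2(p) + D(p), \]
so $M_2(p) \subseteq M_2'(p)$ place by place, i.e.\ $M_2 \sqsubseteq M_2'$. If moreover $M_1 \sqsubsetneq M_1'$, then $D(p^*) \neq \emptyset$ for some place $p^*$, which upgrades the inclusion to $M_2(p^*) \subsetneq M_2'(p^*)$ and hence to $M_2 \sqsubsetneq M_2'$.

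The only non-routine point in this argument is the fresh-name alignment in the first step, which is precisely what the $\alpha$-invariance of Proposition~\ref{prop:alpha} is for; the rest is elementary multiset bookkeeping. Keeping the statement at the level of $\sqsubseteq$ (instead of $\sqsubseteq_\alpha$) isolates this arithmetic core, and the subsequent lifting to strict monotonicity with respect to $\sqsubseteq_\alpha$ will then follow by composing this lemma with a further $\alpha$-renaming that embeds $M_1$ as a literal submarking of $M_1'$.
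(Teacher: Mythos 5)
Your componentwise computation $M_2'(p)=M_2(p)+D(p)$ with $D(p)=M_1'(p)-M_1(p)$ is exactly the paper's argument, which also fires the same mode $\sigma$ from $M_1'$ and does the same multiset bookkeeping, with the same strictness upgrade at a place where $D(p)\neq\emptyset$. Where you deviate is the preliminary fresh-name alignment. You are right that the freshness side condition $\sigma(\nu)\notin\Id(M_1')$ is a genuine obstruction, one which the paper's proof silently skips (it only checks $\sigma(F(p,t))\subseteq M_1(p)\subseteq M_1'(p)$). But look at what your fix costs: after renaming the created names and ``still calling the result $M_2$'', the inclusion you finally derive holds between that renamed marking and $M_2'$. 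For the original $M_2$ you have only shown $M_2\sqsubseteq_\alpha M_2'$ (strictly so), not $M_2\sqsubset M_2'$. So, read as a proof of the lemma as literally stated --- strict monotonicity with respect to plain $\sqsubseteq$ --- your argument breaks exactly at the sentence where you overwrite $M_2$.

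Moreover, that gap cannot be repaired, because the plain-$\sqsubseteq$ statement fails when the created name clashes with $\Id(M_1')$: take $t$ with $F(p,t)=\{x\}$ and $F(t,q)=\{\nu\}$, let $M_1(p)=\{a\}$, $M_1(q)=\emptyset$, and fire $t$ with $\sigma(x)=a$, $\sigma(\nu)=b$, giving $M_2(p)=\emptyset$, $M_2(q)=\{b\}$; with $M_1'(p)=\{a,b\}$, $M_1'(q)=\emptyset$ we have $M_1\sqsubset M_1'$, yet every firing from $M_1'$ must put on $q$ a name outside $\{a,b\}$, so no successor of $M_1'$ covers $M_2$ with respect to $\sqsubseteq$. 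The lemma is therefore only correct under the implicit proviso that the names created by $\sigma$ do not occur in $M_1'$ --- which is how the paper's proof must be read, and is harmless there because Prop.~\ref{prop:alpha} lets one assume it when deriving Prop.~\ref{prop:monotonic_alpha} --- or else stated directly up to $\equiv_\alpha$, which is what your argument actually establishes and is all the well-structuredness proof needs. If you keep your version, state explicitly that the conclusion is modulo $\equiv_\alpha$, and add the one-line justification that strictness survives the renaming (the renamed marking has the same cardinalities place by place, and the kernel of $\sqsubseteq_\alpha$ is $\equiv_\alpha$).
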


\begin{proof}
  Let us suppose that $M_1\too{t(\sigma)}M_2$ and $M_1\sqsubset
  M'_1$. From the former, we know in the first place that
  $\sigma(F(p,t))\in M_1(p)$ for all $p$ because that firing
  is enabled, and
  $M_2(p)=M_1(p)-\{\sigma(F(p,t))\}+\{\sigma(F(t,p))\}$ by definition
  of firing. The latter implies 
  $M_1(p)\subset M'_1(p)$. Then, for all $p$,
  $\sigma(F(p,t))\in M_1(p)\subseteq M'_1(p)$ and, therefore,
  the transition is enabled in $M'_1$. So that $t$ can be fired to obtain
  $M'_2(p)=M'_1(p)-\{\sigma(F(p,t))\}+\{\sigma(F(t,p))\}$. 
  Since $M_1(p)\subset M'_1(p)$ we have that
  $M_2(p)=M_1(p)-\{\sigma(F(p,t))\}+\{\sigma(F(t,p))\}\subset
  M'_1(p)-\{\sigma(F(p,t))\}+\{\sigma(F(t,p))\}=M'_2(p)$ and the thesis
  follows.
\end{proof}

\begin{proposition}\label{prop:monotonic_alpha}
  The firing relation of $\nu$-PN is strictly
  monotonic with respect to~$\sqsubseteq_\alpha$.
\end{proposition}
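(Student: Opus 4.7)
The plan is to lift strict monotonicity from $\sqsubseteq$ (Lemma~\ref{lema:mono}) to $\sqsubseteq_\alpha$ by combining it with the $\alpha$-conversion invariance of firing (Prop.~\ref{prop:alpha}). Suppose $M_1\too{t(\sigma)}M_2$ and $M_1\sqsubseteq_\alpha M'_1$. By definition of $\sqsubseteq_\alpha$, there is a bijection $\iota\colon\Id(M_1)\to A$ for some $A\subseteq\Id(M'_1)$ such that $\iota(M_1)\sqsubseteq M'_1$. The first step is to obtain a firing from $\iota(M_1)$ that matches the original one; the second step is to apply Lemma~\ref{lema:mono} to that firing, using the inclusion $\iota(M_1)\sqsubseteq M'_1$.

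More precisely, I would define a mode $\sigma'$ for $t$ that agrees with $\iota\circ\sigma$ on the non-$\nu$ variables of $t$, and sends each $\nu\in\Nu\cap\Var(t)$ to a name that is fresh with respect to $M'_1$ (and distinct from the other images of $\sigma'$). Because $N$ can be assumed normal, there is at most one such fresh name to choose. With this choice, $\sigma'$ is still an injection and, since $\iota(M_1)\sqsubseteq M'_1$, every fresh name produced is also fresh with respect to $\iota(M_1)$. Proposition~\ref{prop:alpha} then yields a marking $N_2$ with $\iota(M_1)\too{t(\sigma')}N_2$ and $M_2\equiv_\alpha N_2$.

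Now $\sigma'$ is by construction a valid mode for $t$ at $M'_1$ (the preconditions are available because $\iota(M_1)\sqsubseteq M'_1$, and the new names are fresh for $M'_1$), so $M'_1\too{t(\sigma')}M'_2$ for some $M'_2$. Applying Lemma~\ref{lema:mono} to the firings $\iota(M_1)\too{t(\sigma')}N_2$ and $M'_1\too{t(\sigma')}M'_2$ gives $N_2\sqsubseteq M'_2$, whence $M_2\equiv_\alpha N_2\sqsubseteq M'_2$, i.e.\ $M_2\sqsubseteq_\alpha M'_2$. For the strict case, observe that if $M_1\sqsubset_\alpha M'_1$ then $\iota(M_1)\sqsubset M'_1$: otherwise $M'_1\sqsubseteq\iota(M_1)$ would give $M'_1\equiv_\alpha\iota(M_1)\equiv_\alpha M_1$ (since the kernel of $\sqsubseteq_\alpha$ is $\equiv_\alpha$), contradicting strictness. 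Lemma~\ref{lema:mono} then delivers $N_2\sqsubset M'_2$, so $M_2\sqsubset_\alpha M'_2$ follows.

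The only delicate point, and where the argument could go wrong if handled carelessly, is the choice of the fresh names in $\sigma'$: if we merely set $\sigma'=\iota\circ\sigma$ as in the proof of Prop.~\ref{prop:alpha}, the created names would be fresh for $\iota(M_1)$ but might already occur in $M'_1\setminus\iota(M_1)$, blocking the firing from $M'_1$. Renaming the created names before invoking Lemma~\ref{lema:mono} is what makes the lifting work, and is the main (small) obstacle in the proof.
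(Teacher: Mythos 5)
Your proof is correct and follows essentially the same route the paper intends, namely combining Lemma~\ref{lema:mono} (strict monotonicity for $\sqsubseteq$) with the $\alpha$-invariance of Proposition~\ref{prop:alpha}; you are simply spelling out the details that the paper compresses into one line. In particular, your choice of the mode $\sigma'$ with $\sigma'(\nu)\notin\Id(M'_1)$ before invoking Lemma~\ref{lema:mono} correctly handles the freshness issue that the paper leaves implicit.
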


\begin{proof}
  It is a direct consequence of the previous lemma and Prop.~\ref{prop:alpha}.
\end{proof}

\subsection{Effective $\Pred$-basis}

Let us now move to the last condition we must check, effective $\Pred$-basis.
Let us denote by ${\uc M}$ and $\uca M$ the upward closure of $M$ with respect
$\sqsubseteq$ and $\sqsubseteq_\alpha$, respectively.

\begin{definition}
    Given a transition $t$ of $N$ and $\sigma$ a mode for $t$, we define 
    $\Pred_t$ and $Pre_{t(\sigma)}$ as the functions mapping markings to sets
   of markings, defined by
    $\Pred_t(M)=\{M'\mid\exists\sigma~M'\too{t(\sigma)}M\}$ and $\Pred_{t(\sigma)}(M)=\{M'\mid M'\too{t(\sigma)}M\}$, and
    extend them pointwise to sets of markings.
\end{definition}

With these notations we need to compute $\minn(\Pred_t(\uca M))$ for each marking $M$ and $t\in T$.
By Prop.~\ref{prop:alpha} it is enough to compute $\minn(\Pred_t(\uc M))$. Notice that 
the minor set of $\Pred_t(\uc M)$ is still considered with respect to $\sqsubseteq_\alpha$, so that it
is finite.

When computing the predecessors of $\uparrow M$, it may be the case
that $M$ itself has no predecessors, but some other markings in
$\uparrow M$ do. In the next definition we identify the least marking in
$\uc M$ with predecessors. We will use the following notation:
Given two markings $M_1$ and $M_2$ we will denote by
$M_1\sqcup M_2$ the marking given by $(M_1\sqcup
M_2)(p)=M_1(p)\sqcup M_2(p)$.

\begin{definition}
  Let $t$ be a transition of $N$, $\sigma$ a mode of $t$ and $M$ a marking of $N$. We define
 $\minn_{t(\sigma)}(M)=M\sqcup\sigma(F(t,-))$, where $\sigma(F(t,-))$ is the marking of $N$ defined by
$\sigma(F(t,-))(p)=\sigma(F(t,p))$.
\end{definition}

Indeed, $\minn_{t(\sigma)}(M)$ is a marking in $\uparrow M$ with some predecessors.
Moreover, is the least such marking, as proved next.

\begin{lemma}\label{lemma:min}
   Let $M$ be a marking of $N$, $t$ a transition of $N$ and
 $\sigma$ a mode of $t$.
   Then $\minn_{t(\sigma)}(M)$ is the least $M'$ such that
   $M\sqsubseteq M'$ and
$Pre_{t(\sigma)}(M')\neq\emptyset$.
\end{lemma}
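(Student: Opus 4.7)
The plan is to verify two things: first, that $\minn_{t(\sigma)}(M)$ itself lies in $\uc M$ and admits a $t(\sigma)$-predecessor; second, that it is below any other marking with those two properties under $\sqsubseteq$.

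For the first part, observe that by the definition of the pointwise maximum, $M(p)\subseteq M(p)\sqcup\sigma(F(t,p))$ for every $p$, so $M\sqsubseteq\minn_{t(\sigma)}(M)$ is immediate. To exhibit a predecessor, I would define
\[
M''(p) \;=\; \bigl(\minn_{t(\sigma)}(M)(p)-\sigma(F(t,p))\bigr)+\sigma(F(p,t)).
\]
The subtraction is legal because $\sigma(F(t,p))\subseteq M(p)\sqcup\sigma(F(t,p))=\minn_{t(\sigma)}(M)(p)$, and the inclusion $\sigma(F(p,t))\subseteq M''(p)$ is immediate from the additive term, so $t$ is enabled at $M''$ with mode $\sigma$. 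The one nontrivial point is the freshness side-condition: for each $\nu\in\Nu\cap\Var(t)$ we must have $\sigma(\nu)\notin\Id(M'')$. Using $\Nu\cap\pre(t)=\emptyset$ the additive part $\sigma(F(p,t))$ contains no fresh name, so freshness reduces to checking that $\sigma(\nu)\notin\minn_{t(\sigma)}(M)(p)-\sigma(F(t,p))$; this holds whenever $\sigma(\nu)\notin\Id(M)$, which we may assume without loss of generality by Prop.~\ref{prop:alpha} (renaming the fresh targets of $\sigma$ away from $\Id(M)$ yields an $\alpha$-equivalent mode with the same postcondition). A direct calculation then gives $M''\too{t(\sigma)}\minn_{t(\sigma)}(M)$.

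For the minimality part, suppose $M\sqsubseteq M'$ and that $M''\too{t(\sigma)}M'$ for some $M''$. Since $M'(p)=(M''(p)-\sigma(F(p,t)))+\sigma(F(t,p))$, we have $\sigma(F(t,p))\subseteq M'(p)$ for every $p$. Combined with $M(p)\subseteq M'(p)$, this gives $M(p)\sqcup\sigma(F(t,p))\subseteq M'(p)$, i.e.\ $\minn_{t(\sigma)}(M)\sqsubseteq M'$, as required.

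The conceptual core is simply that $\sqcup$ is the join for $\sqsubseteq$, so any common $\sqsubseteq$-upper bound of $M$ and the postcondition multiset $\sigma(F(t,-))$ dominates $\minn_{t(\sigma)}(M)$. The only delicate step is the freshness discharge in the existence direction; handling it cleanly by invoking $\alpha$-equivalence to choose $\sigma$ so that its $\Nu$-image avoids $\Id(M)$ is what keeps the argument short and avoids bookkeeping over which occurrences of $\sigma(\nu)$ survive in $M''$.
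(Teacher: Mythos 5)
Your proof is correct and follows essentially the same route as the paper's: the same explicit predecessor $M''(p)=\bigl(\minn_{t(\sigma)}(M)(p)-\sigma(F(t,p))\bigr)+\sigma(F(p,t))$ fired with mode $\sigma$ to reach $\minn_{t(\sigma)}(M)$, and the same minimality argument using that $\sqcup$ is the join for $\sqsubseteq$ and that any marking produced by $t(\sigma)$ contains $\sigma(F(t,-))$. The only divergence is that you also discharge the freshness side condition, which the paper's proof silently skips; be aware, though, that your ``without loss of generality'' renaming replaces $\sigma$ by a different mode (changing both $\minn_{t(\sigma)}(M)$ and $Pre_{t(\sigma)}$, so not literally the same postcondition), hence it really amounts to reading the lemma for modes with $\sigma(\Nu\cap\Var(t))\cap\Id(M)=\emptyset$, i.e.\ modulo $\alpha$-equivalence, which is exactly how the lemma is later used together with Prop.~\ref{prop:finite-modes}.
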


\begin{proof} Let us write $\bar{M}=\minn_{t(\sigma)}(M)$. Trivially, $M\sqsubseteq\bar{M}$.
Let us see that $Pre_{t(\sigma)}(\bar{M})\neq\emptyset$. For that
purpose, let
$M_0$ be the marking defined by
$M_0(p)=(M\sqcup\sigma(F(t,-)))(p)-\{\sigma(F(t,p))\}+\{\sigma(F(p,t))\}$
and let us see that $M_0\too{t(\sigma)}\bar{M}$. In the first place,
$t(\sigma)$ is enabled in $M_0$, since $\sigma(F(p,t))\in
M_0(p)$ for each place $p$. Then the transition can be fired in mode
 $\sigma$ and
$M_0(p)-\sigma(F(p,t))+\sigma(F(t,p))=\bar{M}(p)$.
Finally, if $M_1\too{t(\sigma)}M_2$ and $M\sqsubseteq M_2$ let us see that $\bar{M}\sqsubseteq M_2$.
Since $M\sqsubseteq M_2$ it holds
that $M(p)\subseteq M_2(p)$, for all $p$. Then
$\bar{M}(p)=M(p)\sqcup\sigma(F(t,p))\subseteq
M_2(p)\sqcup\sigma(F(t,p))\subseteq M_2(p)$, and the thesis follows.
\end{proof}

Finally, let us see that we can use $\minn_{t(\sigma)}(M)$ to compute
$\minn(\Pred_t(\uparrow M))$.

\begin{figure}[!t]
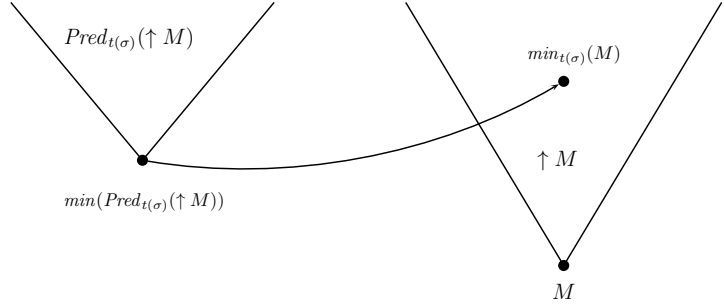

  \begin{center}
  \vspace{3cm}
  \scalebox{0.7}{
  \psline(-6.5,5)(-4,2)(-1.5,5)
   \psline(1,5)(4,0)(7,5)
\cnode*(4,0){3pt}{M}\cnodeput*(4,-0.5){M}{\large$M$}
\cnode*(4,3.5){3pt}{min}
    \uput{0}[0](3.3,4){$\minn_{t(\sigma)}(M)$}
\uput{0}[0](-5.5,1.2){$\minn(\Pred_{t(\sigma)}(\uparrow M))$}
\cnode*(-4,2){3pt}{premin1} 
    \uput{0}[0](3.5,2){\large$\uparrow M$}
    \uput{0}[0](-5.5,4.3){\large$\Pred_{t(\sigma)}(\uparrow M)$}
    \ncarc[nodesep=0pt,arcangle=-20]{->}{premin1}{min}
  }\vspace{0cm}
\end{center}
\caption{Computation of $\Pred_{t(\sigma)}(\uparrow M)$}\label{fig:ideales}
\end{figure}

\begin{proposition}\label{prop:pred-min}
 $\Pred_{t(\sigma)}(\uc M)={\uc\Pred_{t(\sigma)}}(\minn_{t(\sigma)}(M))$
\end{proposition}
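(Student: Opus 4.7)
The plan is to prove the two inclusions separately, using strict monotonicity (Proposition~\ref{prop:monotonic_alpha}, or equivalently Lemma~\ref{lema:mono}) for one direction and a direct construction for the other.

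For $(\supseteq)$, I would take $M'\in {\uc\Pred_{t(\sigma)}}(\minn_{t(\sigma)}(M))$. By definition, there is $M_0\sqsubseteq M'$ with $M_0\too{t(\sigma)}\minn_{t(\sigma)}(M)$. Strict monotonicity of $\too{t(\sigma)}$ with respect to $\sqsubseteq$ yields some $M''\sqsupseteq\minn_{t(\sigma)}(M)$ with $M'\too{t(\sigma)}M''$. Since $M\sqsubseteq\minn_{t(\sigma)}(M)\sqsubseteq M''$ we conclude $M''\in\uc M$, hence $M'\in\Pred_{t(\sigma)}(\uc M)$.

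For $(\subseteq)$, take $M'\in\Pred_{t(\sigma)}(\uc M)$, so there exists $M''\in\uc M$ with $M'\too{t(\sigma)}M''$. Since $M\sqsubseteq M''$ and $\Pred_{t(\sigma)}(M'')\neq\emptyset$, Lemma~\ref{lemma:min} gives $\minn_{t(\sigma)}(M)\sqsubseteq M''$. I would now explicitly construct a predecessor of $\minn_{t(\sigma)}(M)$ below $M'$ by removing from $M'$ precisely the ``excess'' tokens in $M''$ compared to $\minn_{t(\sigma)}(M)$. Concretely, set $\delta(p)=M''(p)-\minn_{t(\sigma)}(M)(p)$ and $M_0(p)=M'(p)-\delta(p)$. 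Trivially $M_0\sqsubseteq M'$. One then checks by direct multiset calculation that
\[
M_0(p)-\sigma(F(p,t))+\sigma(F(t,p)) \;=\; M''(p)-\delta(p) \;=\; \minn_{t(\sigma)}(M)(p),
\]
and that $t(\sigma)$ remains enabled at $M_0$: the tokens in $\delta(p)$ are by definition extra tokens not belonging to $M(p)\sqcup\sigma(F(t,p))$, so none of them can be the $\sigma(F(p,t))$ tokens needed to fire $t$. The freshness condition $\sigma(\nu)\notin\Id(M_0)$ transfers automatically from $M'$ because $\Id(M_0)\subseteq\Id(M')$.

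The main obstacle is the multiset bookkeeping in the $(\subseteq)$ direction, namely verifying that $\sigma(F(p,t))\subseteq M_0(p)$ after subtracting $\delta(p)$. Writing $\delta(p)(a)+\sigma(F(p,t))(a)=M'(p)(a)-\max(M(p)(a)-\sigma(F(t,p))(a),0)$ shows the subtraction never eats into the preconditions, confirming the firing is well-defined. Everything else is a routine unfolding of definitions.
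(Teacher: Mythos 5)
Your proof is correct at the level of rigor the paper itself works at, and it reaches the result by a somewhat different organization than the paper's. The paper's proof posits a generator $\oM$ with $\Pred_{t(\sigma)}(\uc M)={\uc\oM}$ (implicitly using that this predecessor set is upward closed and that $\Pred_{t(\sigma)}$ is a partial function) and then shows $\oM$ and $\Pred_{t(\sigma)}(\minn_{t(\sigma)}(M))$ are mutually $\sqsubseteq$-comparable, citing Lemma~\ref{lemma:min} and dismissing the key step with the parenthetical remark that predecessors inherit the order ``because the effect of $t(\sigma)$ is constant''. You instead prove the two inclusions directly: your $(\subseteq)$ direction, with the explicit $\delta$-construction and the identity $\delta(p)(a)+\sigma(F(p,t))(a)=M'(p)(a)-\max(M(p)(a)-\sigma(F(t,p))(a),0)$, is precisely a worked-out version of that parenthetical remark (and your multiset bookkeeping checks out, including enabledness and the transfer of freshness since $\Id(M_0)\subseteq\Id(M')$); your $(\supseteq)$ direction invokes Lemma~\ref{lema:mono} where the paper implicitly relies on upward-closedness of $\Pred_{t(\sigma)}(\uc M)$. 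So the ingredients are the same (Lemma~\ref{lemma:min} plus monotonicity/constancy of the effect of $t(\sigma)$), but your write-up is more self-contained and fills in a step the paper leaves implicit. One caveat, shared equally by the paper: monotonicity with a \emph{fixed} mode $\sigma$ under plain $\sqsubseteq$ silently ignores the freshness requirement $\sigma(\nu)\notin\Id(M'_1)$ for $\nu\in\Nu\cap\Var(t)$; a larger marking $M'$ may already contain $\sigma(\nu)$, in which case $t(\sigma)$ is not enabled at $M'$, so the $(\supseteq)$ inclusion (equivalently, the paper's assumption that the left-hand side is an up-set) only holds after renaming the created name. This is not a defect you introduced — it is exactly the gap in Lemma~\ref{lema:mono} as stated — and it is harmless for the intended use, since the $\Pred$-basis is ultimately computed modulo $\equiv_\alpha$ via Prop.~\ref{prop:alpha}, but it is worth being aware that your appeal to ``strict monotonicity with respect to $\sqsubseteq$'' inherits it.
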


\begin{proof}
 Let $\oM$ such that $\Pred_{t(\sigma)}({\uc M})={\uc\oM}$. Since $\minn_{t(\sigma)}(M)\in{\uc M}$,
 $\Pred_{t(\sigma)}(\minn_{t(\sigma)}(M))\in{\uc\oM}$, so that $\oM\sqsubseteq\Pred_{t(\sigma)}(\minn_{t(\sigma)}(M))$.
 Let us see that also $\Pred_{t(\sigma)}(\minn_{t(\sigma)}(M))\sqsubseteq\oM$ holds.
 Indeed, $\oM\in\Pred_{t(\sigma)}({\uc M})$, so there is $M'\in{\uc M}$ such that $\oM\too{t(\sigma)}M'$.
 By the previous lemma, since $M'$ has predecessores, $\minn_{t(\sigma)}(M)\sqsubseteq M'$, which entails by the previous lemma
 that the same relation also holds for their predecessors (because the effect of $t(\sigma)$ is constant), and hence the thesis.
\end{proof}

Fig.~\ref{fig:ideales} can give you some insight about the proof
of the previous result. A marking $M$ induces an upwards
closed set, the cone in the right handside of Fig.~\ref{fig:ideales}.
We want to compute (a finite representation of) the set of the predecessors of the 
markings in that cone. For that purpose, we first obtain $\minn_{t(\sigma)}(M)$, which
is known to have a predecessor, according to Prop.~\ref{lemma:min}, that is trivially computable.
Therefore, every marking in the left handside cone can reach in one step the cone in the right.

Let us now see that in order to compute $\minn(\Pred_t(\uc M))$ it is enough to consider
a finite ammount of modes.

\begin{proposition}\label{prop:finite-modes}
 Let $M$ be a marking, $t$ a transition and $O$ a set of identifiers with $|O|=|\Var(t)|$.
 If $M'\in\Pred_t(\uc M)$ then there is $\sigma:\Var(t)\to\Id(M)\cup O$ and $M''\equiv_\alpha M'$
 such that $M''\in\Pred_{t(\sigma)}(\uc M)$.
\end{proposition}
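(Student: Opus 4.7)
The plan is to take a witness $M' \too{t(\sigma)} M_1$ with $M \sqsubseteq M_1$ provided by $M' \in \Pred_t(\uc M)$ and, by $\alpha$-converting it appropriately, produce an equivalent witness whose mode lands inside $\Id(M) \cup O$. The workhorse is Prop.~\ref{prop:alpha}: if $\iota$ is any bijection on $\Id$ then $\iota(M') \too{t(\iota \circ \sigma)} \iota(M_1)$ is a legal firing (after letting $\iota$ dispatch the $\nu$-images appropriately), and $\iota(M') \equiv_\alpha M'$.

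Three conditions will be imposed on $\iota$. First, $\iota$ must be the identity on $\Id(M)$, so that $M \sqsubseteq M_1$ transports to $M \sqsubseteq \iota(M_1)$ (for every $a \in \Id(M)$ and every $p$, $\iota(M_1)(p)(a) = M_1(p)(a) \geq M(p)(a)$), placing $\iota(M_1)$ in $\uc M$. Second, $\iota$ must send $\sigma(\pre(t)) \setminus \Id(M)$ injectively into $O$. Third, $\iota$ must send $\sigma(\Nu \cap \post(t)) \setminus \Id(M)$ injectively into $O$, with images disjoint from those chosen in the previous clause. Then $\sigma' = \iota \circ \sigma$ automatically takes values in $\Id(M) \cup O$. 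Feasibility rests on the clean counting $|O| = |\Var(t)| = |\pre(t)| + |\Nu \cap \post(t)|$ (using $\Nu \cap \pre(t) = \emptyset$ and injectivity of $\sigma$): in the extreme case where $\sigma(\Var(t))$ is disjoint from $\Id(M)$, we must place exactly $|O|$ identifiers into $O$, which fits exactly; in all other cases strictly fewer slots are needed. The remaining identifiers of $\Id(M')$ can be sent by $\iota$ to names fresh with respect to $\Id(M) \cup O$, where they play no further role, and $\iota$ is extended to a bijection on $\Id$ in an arbitrary way outside this construction.

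With $\iota$ in hand, set $M'' = \iota(M')$ and $\sigma' = \iota \circ \sigma$. Injectivity of $\sigma'$ is immediate from that of $\iota$ and $\sigma$. Enabledness of $t$ at $M''$ with $\sigma'$ follows because $\sigma'(F(p,t)) = \iota(\sigma(F(p,t))) \subseteq \iota(M'(p)) = M''(p)$, and because for $\nu \in \Nu \cap \Var(t)$ one has $\sigma'(\nu) = \iota(\sigma(\nu)) \notin \iota(\Id(M')) = \Id(M'')$ (since $\iota$ is a bijection and $\sigma(\nu) \notin \Id(M')$). The resulting marking is $\iota(M_1)$, which lies in $\uc M$ by the first clause, so $M'' \in \Pred_{t(\sigma')}(\uc M)$ as required. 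The main obstacle is organising the bookkeeping of $\iota$ when $O$, $\Id(M)$ and $\sigma(\Var(t))$ interact non-trivially; under the natural reading that $O$ is disjoint from $\Id(M)$, the argument reduces to a tight pigeon-hole on the decomposition $\Var(t) = \pre(t) \uplus (\Nu \cap \post(t))$.
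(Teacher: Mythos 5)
Your proof is correct and follows essentially the same route as the paper's: build an $\alpha$-conversion $\iota$ that fixes $\Id(M)$ and injectively relocates the remaining identifiers used by the mode into $O$ (feasible precisely because $|O|=|\Var(t)|$), then take $M''=\iota(M')$ and $\sigma'=\iota\circ\sigma$ and check that the firing lands in $\uc M$. If anything, your bookkeeping via the decomposition $\Var(t)=\pre(t)\uplus(\Nu\cap\post(t))$ is slightly more explicit than the paper's case split on whether $\sigma'(x)$ lies in $\Id(M)$ or in $\Id(\oM)\setminus\Id(M)$ (it also covers names wholly consumed by the firing), and both arguments rely on the intended reading that $O$ is disjoint from $\Id(M)$, which you rightly flag.
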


\begin{proof}
 Let $\sigma'$ such that $M'\too{t(\sigma')}\oM$ with $M\sqsubseteq\oM$. Because of the latter,
 $\Id(\oM)=\Id(M)\cup O'$ for some set of identifiers $O'$. Let us write $\sigma'(x)=o'_x$ whenever
 $\sigma'(x)\in O'$. For each such $x\in\Var(t)$, choose a different $o_x\in O$ (notice that this
 can be done because $|O|=|\Var(t)|$). Let us define $\sigma:\Var(t)\to\Id(M)\cup O$ as follows:
 $\sigma(x)=\sigma'(x)$ if $\sigma'(x)\in\Id(M)$, and $\sigma(x)=o_x$ if $\sigma'(x)\in O'$.
 Let also $\iota:\Id(M')\to(\Id(\oM)\setminus O')\cup O$ defined by $\iota(o'_x)=o_x$ and
 $\iota(a)=a$ elsewhere. Finally, let us take $M''=\iota(M')$ and $\oM'$ such that
 $M''\too{t(\sigma)}\oM'$. It holds that $\oM'\in{\uc M}$ and the thesis follows.
\end{proof}

Therefore, in order to compute $\Pred_t(\uc M)$ we can fix a set of names $O$ with as many
names as variables in $\Var(t)$, and consider only modes mapping variables to names
in $\Id(M)$ or in $O$. Notice that there are finitely many such modes.

\begin{proposition}
  For each $M$, the set $\mathit{\minn(\Pred_t(\uparrow M))}$ is
  computable.
\end{proposition}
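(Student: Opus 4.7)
The plan is to combine the three preceding propositions, each of which contributes one ingredient to effective computability. First, by Proposition \ref{prop:alpha} (invariance under $\alpha$-conversion), it is enough to produce a finite set of representative markings whose $\sqsubseteq_\alpha$-upward closures cover $\Pred_t({\uca M})$. Second, Proposition \ref{prop:finite-modes} tells us that every predecessor is $\alpha$-equivalent to one obtained via a mode $\sigma\colon\Var(t)\to\Id(M)\cup O$, where $O$ is a fixed finite set of $|\Var(t)|$ fresh names. Since $\Id(M)\cup O$ is finite, there are only finitely many such modes, and the task reduces to computing $\minn(\Pred_{t(\sigma)}({\uc M}))$ for each of these finitely many $\sigma$.

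For each fixed mode $\sigma$, I invoke Proposition \ref{prop:pred-min}, which gives $\Pred_{t(\sigma)}({\uc M})={\uc\Pred_{t(\sigma)}}(\minn_{t(\sigma)}(M))$. The marking $\minn_{t(\sigma)}(M)=M\sqcup\sigma(F(t,-))$ is directly computable from $M$ and $\sigma$. Moreover, the construction in the proof of Lemma \ref{lemma:min} exhibits an explicit $t(\sigma)$-predecessor, namely the marking $M_0^\sigma$ defined by $M_0^\sigma(p)=\minn_{t(\sigma)}(M)(p)-\sigma(F(t,p))+\sigma(F(p,t))$, which is likewise computable. Thus each of the finitely many admissible modes contributes one explicit predecessor marking.

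Collecting the finite set $\{M_0^\sigma\}_\sigma$ and extracting its $\sqsubseteq_\alpha$-minimal elements — which is effective since $\sqsubseteq_\alpha$ is decidable — yields the desired $\minn(\Pred_t({\uca M}))$. The step I expect to require the most care is verifying that $\{M_0^\sigma\}_\sigma$ is truly $\sqsubseteq_\alpha$-cofinal below $\Pred_t({\uca M})$: given any predecessor $M'$, Proposition \ref{prop:finite-modes} supplies an $\alpha$-equivalent $M''\in\Pred_{t(\sigma)}({\uc M})$ for one of the chosen $\sigma$, and Proposition \ref{prop:pred-min} then gives $M_0^\sigma\sqsubseteq M''$, hence $M_0^\sigma\sqsubseteq_\alpha M'$. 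So every predecessor dominates (modulo $\alpha$-equivalence) some $M_0^\sigma$, confirming that the minimal elements of this finite set form a valid minor set. The main obstacle is therefore purely bookkeeping — tracking how the $\alpha$-renaming used to restrict to modes into $\Id(M)\cup O$ interacts with the predecessor construction — while the substantive content has already been packaged into the earlier propositions.
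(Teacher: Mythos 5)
Your proposal is correct and follows essentially the same route as the paper: restrict to the finitely many modes $\sigma:\Var(t)\to\Id(M)\cup O$ given by Proposition~\ref{prop:finite-modes}, use Proposition~\ref{prop:pred-min} (with Lemma~\ref{lemma:min}) to reduce each $\Pred_{t(\sigma)}(\uparrow M)$ to the explicitly computable predecessor of $\minn_{t(\sigma)}(M)$, and take minimal elements of the resulting finite union with respect to $\sqsubseteq_\alpha$. Your explicit cofinality check is just an unpacking of what those two propositions already package, so no new ideas or gaps are introduced.
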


\begin{proof}
    We can compute $\minn(\Pred_t(\uparrow M))$ as follows:
\[\minn(\Pred_t(\uparrow M))=\minn\big(\underset{\sigma}{\bigcup}\Pred_{t(\sigma)}(\uparrow M)\big)=\minn\Big(\underset{\sigma}{\bigcup}\minn(\Pred_{t(\sigma)}(\uparrow M))\Big)\]
By Prop.~\ref{prop:pred-min} the last term can be computed as
$\minn\big(\underset{\sigma}{\bigcup}\Pred_{t(\sigma)}(\minn_{t(\sigma)}(M))\big)$.
Each $\Pred_{t(\sigma)}(\minn_{t(\sigma)}(M))\big)$ is computable, 
and because by Prop.~\ref{prop:finite-modes} it is enough to consider
finitely many modes, we conclude.
\end{proof}

We have proved that $\nu$-PNs are strictly well structured transitions systems.

\begin{proposition}\label{prop:cbt}
  Coverability, boundedness and termination are decidable for $\nu$-PN.
\end{proposition}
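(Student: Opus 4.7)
The plan is simply to assemble the three ingredients already established in this section and appeal to the general theory of strict WSTS recalled in Section~\ref{sec:basic}. Concretely, the proof will observe that the transition system $(Reach(N),\to,M_0,\sqsubseteq_\alpha)$ associated to a $\nu$-PN satisfies the definition of strict effective WSTS: (i) $\sqsubseteq_\alpha$ is a decidable wqo on markings, by the proposition establishing the wqo property together with the trivially decidable character of $\sqsubseteq_\alpha$ noted right after Definition~\ref{def:orderalpha}; (ii) $\to$ is strictly monotonic with respect to $\sqsubseteq_\alpha$, which is exactly the content of Prop.~\ref{prop:monotonic_alpha}; and (iii) the system has effective $\Pred$-basis, since for each marking $M$ and each transition $t$ we can compute $\minn(\Pred_t(\uca M))$ by combining the reduction to $\minn(\Pred_t(\uc M))$ granted by Prop.~\ref{prop:alpha} with the computability proposition proved immediately above.

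Once these three conditions are laid out, I would conclude in a single sentence for each problem. Coverability and termination follow from the general decidability result for (effective) WSTS cited from~\cite{AbdullaCJT00,Finkel01Everywhere}. Boundedness requires the extra strict compatibility condition; since Prop.~\ref{prop:monotonic_alpha} gives strict monotonicity (not merely monotonicity), the strict WSTS result from~\cite{Finkel01Everywhere} applies and yields decidability of boundedness as well.

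I do not foresee a real obstacle: the proof is purely an assembly step. The only subtle point worth making explicit is that the effective $\Pred$-basis was actually computed with respect to $\sqsubseteq$ rather than $\sqsubseteq_\alpha$, so I would include one line justifying that this suffices. Namely, $\uca M = \bigcup_{M'\equiv_\alpha M}\uc M'$, but by $\alpha$-invariance of the firing relation (Prop.~\ref{prop:alpha}) the set $\Pred_t(\uca M)$ is closed under $\equiv_\alpha$, so a minor set of $\Pred_t(\uc M)$ with respect to $\sqsubseteq$ projects to a minor set of $\Pred_t(\uca M)$ with respect to $\sqsubseteq_\alpha$, which is finite because $\sqsubseteq_\alpha$ is a wqo. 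With this remark in place, the proposition follows in two or three lines.
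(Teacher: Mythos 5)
Your proposal is correct and follows essentially the same route as the paper: the paper treats this proposition as the direct assembly of the three conditions established in Section~\ref{sec:coverability} (decidable wqo, strict monotonicity, effective $\Pred$-basis), invoking the general decidability results for strict WSTS from~\cite{AbdullaCJT00,Finkel01Everywhere}. Your extra remark on passing from $\sqsubseteq$ to $\sqsubseteq_\alpha$ via $\alpha$-invariance is the same justification the paper gives right after defining $\Pred_t$, so there is no substantive difference.
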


One can think that we have proved decidability of a weak version
of the coverability problem, that in which we allow arbitrary
renaming of identifiers. For instance, if we consider the net in the left of
Fig.~\ref{fig:igualVsNoIgual},
 and we ask whether the marking $M$ given by
$M(p_0)=M(p_1)=\emptyset$, $M(p_3)=\{b\}$ and $M(p_4)=\{a\}$
can be covered, the result would be affirmative, since
the marking obtained by exchanging $a$ and $b$ in $M$ (which is
$\alpha$-equivalent to $M$) is reachable in one step. 

However, we can use this apparently
weak version to decide a more restricted version of coverability:
Let $M_0$ and $M_f$ be two markings of a $\nu$-PN
$N=(P,T,F)$. We want to decide if we can cover $M_f$ from $M_0$ 
without allowing renaming of names. Thus, if a name $a$ appears
both in $M_0$ and in $M_f$ we want to reach a marking $M$
such that $M_f\sqsubseteq_\iota M$ with $\iota$ satisfying
$\iota(a)=a$. Since $R=\Id(M_0)\cap\Id(M_f)$ contains only a finite 
number of names, we can add new places in order to ensure the latter.
We define the $\nu$-PN $N^*=(P\cup R,T,F)$. For any marking
$M$ we define $M^*(p)=M(p)$ if $p\notin R$ and $M^*(r)=\{r\}$ for
all $r\in R$. By construction of $N^*$, places in 
$R$ are isolated, so that their tokens are never moved or removed.
In particular, for any reachable $M$ with $M_f\sqsubseteq_\iota M$ it
holds $\iota(a)=a$ for every $a\in R$.

\begin{figure}[!t]
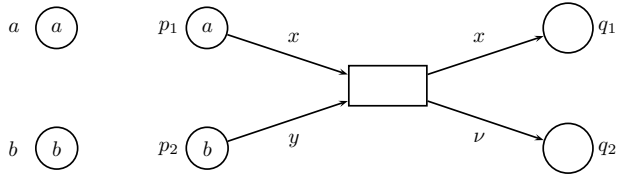

 \vspace{1cm}
 \begin{center}
 \scalebox{0.8}{
  \cnodeput(-4.5,1){pa}{$~a~$}
    \cnodeput(-4.5,-1){pb}{$~b~$}
  \cnodeput(-2,1){p1}{$~a~$}
    \cnodeput(-2,-1){p2}{$~b~$}
    \cnodeput(4,1){p3}{\white$~\bullet~$}
    \cnodeput(4,-1){p4}{\white$~\bullet~$}
    \rput[B](1,-0.1){\Rnode{t}{\psframebox{\Large\white$sas_1!$}}}
     \ncline[nodesep=0pt]{->}{p1}{t}
     \Aput{$x$}
    \ncline[nodesep=0pt]{->}{p2}{t}
    \Bput{$y$}
     \ncline[nodesep=0pt]{->}{t}{p3}
    \Aput{$x$}
\ncline[nodesep=0pt]{->}{t}{p4}
  \Bput{$\nu$}
  \uput{0}[0](-2.8,1){$p_1$}
        \uput{0}[0](-2.8,-1){$p_2$}
        \uput{0}[0](4.5,1){$q_1$}
        \uput{0}[0](4.5,-1){$q_2$}
        \uput{0}[0](-5.3,1){$a$}
        \uput{0}[0](-5.3,-1){$b$}}\vspace{1.5cm}
\end{center}\vspace{-1cm}
\caption{The $\nu$-PN in the left of Fig.~\ref{fig:igualVsNoIgual} extended to decide a restricted version of coverability}\label{ejTonto2}
\vspace{-0cm}\end{figure}

Let us again consider the example in Fig.~\ref{fig:igualVsNoIgual}. 
Following the previous construction, that can be
seen in Fig.~\ref{ejTonto2}, we add a place for $a$ and another
one for $b$. When we execute this new net, the reasoning we followed
before now fails. In one step we can reach $M'$
with $M'(p_0)=M'(p_1)=\emptyset$, $M'(p_3)=M'(a)=\{a\}$ and
$M'(p_4)=M'(b)=\{b\}$. However, thanks to the newly added places,
it is not true that $M'$ equals the result of exchanging $a$ and $b$ in $M^*$ (using the notations of the proof
of the previous result).

We could ask ourselves whether we can consider a ligther version of
the reachability problem  
in which we allow renaming of names, as we
are doing with coverability, that allows us to obtain decidability.
However, decidability of $\alpha$-reachability implies the
decidability of reachability, by using the same trick we have
used for coverability.

\subsection{Complexity of the decision procedures}

Now we obtain hardness results for the decision problems shown to be 
decidable in Prop.~\ref{prop:cbt}. We do it by means of a simulation
of reset nets by $\nu$-PN.
The construction
is very similar to the one we used in Sect.~\ref{sec:reach} to simulate
inhibitor nets with $\nu$-PN.

  A reset net is a tuple
  $N=(P,T,F,F_r)$, where $P$ and $T$ are disjoint sets of places and transitions, respectively,
  $F\subseteq (P\times T)\cup (T\times P)$, and
  $F_r\subseteq P\times T$.
Pairs in $F_r$ are reset arcs. For a transition $t\in T$
we write $^\bullet t=\{p\in P\mid (p,t)\in F\}$ and $^r t=\{p\in
P\mid (p,t)\in F_r\}$, and analogously for $t^\bullet$.
For simplicity, and without loss of generality, we assume that
${^r t}\cap t^\bullet=\emptyset$ for every $t\in T$.

  A marking of a reset net $N$ is a multiset of places of $N$. A transition $t$
  is enabled in $M$ if $M(p)>0$ for all $p\in {^\bullet t}$.
  In that case $t$ can be fired, producing $M'$ defined as\footnote{Note that we are identifying $F$ with its characteristic funcion.}
  \begin{itemize}
    \item $M'(p)=(M(p)-F(p,t))+F(t,p)$ for all $p\notin {^r t}$,
    \item $M'(p)=0$ for all $p\in {^r t}$.
  \end{itemize}

\begin{figure}[!t]
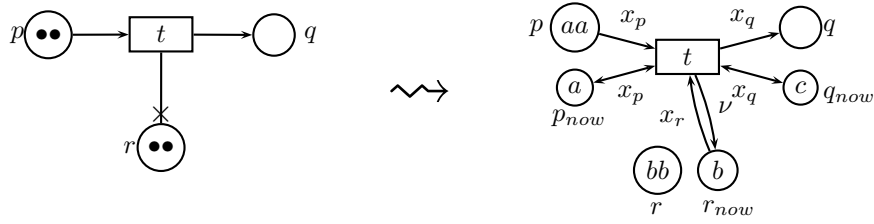

\vspace{0.3cm}
\begin{center}
 \scalebox{1}{
        \cnodeput(-5,0.3){p}{$\bullet\bullet$}
        \cnodeput(-3.5,-1.2){r}{$\bullet\bullet$}
        \cnodeput(-2,0.3){q}{\white$k$}
        \rput[B](-3.5,0.2){\Rnode{t}{\psframebox{$~~t~~$}}}
        \ncline[nodesep=0pt]{->}{p}{t}
        \ncline[nodesep=0pt]{-}{r}{t}
        \ncline[nodesep=0pt]{->}{t}{q}
        \uput{0}[0](-5.5,0.3){$p$}
        \uput{0}[0](-1.6,0.3){$q$}
        \uput{0}[0](-4,-1.2){$r$}
	\uput{0}[0](-0.5,-0.5){\Huge$\rightsquigarrow$}
        \cnodeput(2,0.4){p}{$aa$}
        \cnodeput(2,-0.4){pn}{$a$}
        \cnodeput(3.1,-1.5){r}{$bb$}
        \cnodeput(3.9,-1.5){rn}{$b$}
        \cnodeput(5,0.4){q}{\white$k$}
        \cnodeput(5,-0.4){qn}{$c$}
        \rput[B](3.5,-0.1){\Rnode{t}{\psframebox{$~~t~~$}}}
        \ncline[nodesep=0pt]{->}{p}{t}
        \Aput[0.1]{$x_p$}
        \ncline[nodesep=0pt]{<->}{pn}{t}
        \Bput[0.1]{$x_p$}
        \ncarc[nodesep=0pt]{->}{rn}{t}
        \Aput[0.1]{$x_r$}
        \ncarc[nodesep=0pt]{->}{t}{rn}
        \Aput[0.1]{$\nu$}
        \ncline[nodesep=0pt]{->}{t}{q}
        \Aput[0.1]{$x_q$}
        \ncline[nodesep=0pt]{<->}{t}{qn}
        \Bput[0.1]{$x_q$}
        \uput{0}[0](1.4,0.4){$p$}
        \uput{0}[0](1.7,-0.8){$p_{now}$}
        \uput{0}[0](3,-2){$r$}
	\uput{0}[0](3.7,-2){$r_{now}$}
	\uput{0}[0](5.3,0.4){$q$}
	\uput{0}[0](5.3,-0.5){$q_{now}$}
	\uput{0}[0](-3.67,-0.75){\large$\times$}
    }
 \end{center}
\vspace{1.5cm}\caption{Simulation of reset nets}\label{fig:reset}\vspace{-0.3cm}
\end{figure}

\begin{proposition}\label{prop:nuPN-reset}
 Given a reset net $N=(P,T,F,F_r,M_0)$ we can build in polynomial time a $\nu$-PN $N^*=(P\cup\bar P,T,F^*,M^*_0)$
 such that:
   \begin{itemize}
     \item If $M$ is reachable in $N$ then there is $M^*$ reachable in $N^*$ such that for 
      every $p\in P$ there is $a_p\in\Id$ with $M^*(\bar p)=\{a_p\}$ and $M^*(p)(a_p)=M(p)$.
     \item If $M^*$ is reachable in $N^*$ then there is $M$ reachable in $N$ and $a_p\in\Id$ for every $p\in P$ such that
     $M^*(\bar p)=\{a_p\}$ and $M^*(p)(a_p)=M(p)$.
 \end{itemize}
In particular, 
 \begin{itemize}
  \item $N$ terminates iff $N^*$ terminates,
  \item Given $M$ we can also build $M^*$ such that $M$ can be covered in $N$ iff $M^*$ can be covered in $N^*$.
 \end{itemize}

\end{proposition}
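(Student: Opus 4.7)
I would mirror the inhibitor-net construction of Section~\ref{sec:reach}. Introduce a shadow place $\bar p$ for every $p \in P$, meant to carry at all times a single ``current'' identifier $a_p$. An ordinary arc $(p,t) \in F$ (resp.\ $(t,p) \in F$) becomes a name-matched arc $F^*(p,t) = \{x_p\}$ together with a read loop $F^*(\bar p, t) = F^*(t, \bar p) = \{x_p\}$ (resp.\ the symmetric version on outputs), so that every token consumed from or produced in $p$ must carry the identifier currently held by $\bar p$. A reset arc $(p,t) \in F_r$ is encoded, exactly as in Fig.~\ref{fig:reset}, by $F^*(\bar p, t) = \{x_p\}$ and $F^*(t, \bar p) = \{\nu\}$: the transition swaps the current identifier of $\bar p$ for a globally fresh $\nu$, after which every token still sitting in $p$ is marked with an obsolete name and can never be matched again. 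Pick pairwise distinct identifiers $a_p$ and set $M_0^*(\bar p) = \{a_p\}$ and $M_0^*(p)(a_p) = M_0(p)$ (zero on other names). The resulting $N^*$ has size linear in that of $N$.

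The central invariant of reachable markings of $N^*$, proved by a straightforward induction on firing sequences, is that every $\bar p$ is a singleton and that the currents held at distinct shadow places are pairwise distinct (since the only updates of $\bar p$ either reinstall $x_p$ or inject a globally fresh $\nu$). With this in hand the forward direction is easy: given $M \too{t} M'$ in $N$, fire $t$ in $N^*$ with $\sigma(x_p) = a_p$ and $\sigma(\nu)$ fresh; enabledness follows from $M^*(p)(a_p) = M(p) \ge F(p,t)$, and the update on live counts exactly reproduces $M'$, with reset places automatically discarding their old live tokens into garbage. The backward direction is the main technical step: given $M^* \too{t(\sigma)} M^{*\prime}$ in $N^*$, the singleton invariant forces $\sigma(x_p)$ to equal the current $a_p$ of $\bar p$ at every place incident to $t$, so only live tokens are touched; the change in $M^*(p)(a_p)$ therefore matches $F(t,p) - F(p,t)$ at non-reset places, and at reset places the new current $b = \sigma(\nu)$ is fresh, so $M^{*\prime}(p)(b) = 0$, realising the reset. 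Both bullets follow by induction, and the resulting one-to-one step correspondence immediately yields termination of $N$ iff termination of $N^*$.

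For the coverability clause, given a target $M$ of $N$ I would choose pairwise distinct fresh identifiers $c_p$ and set $M^*(\bar p) = \{c_p\}$, $M^*(p)(c_p) = M(p)$. If $M^* \sqsubseteq_\iota M_r^*$ for some reachable $M_r^*$, the condition $M^*(\bar p)(c_p) \le M_r^*(\bar p)(\iota(c_p))$ combined with the singleton invariant forces $\iota(c_p)$ to be the unique current $a_p$ of $\bar p$ in $M_r^*$; the distinctness half of the invariant ensures $\iota$ remains an injection. The remaining inequalities reduce to $M(p) \le M_r^*(p)(a_p) = M_r(p)$, so $M$ is covered in $N$ by the $M_r$ extracted from the backward simulation. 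The converse is immediate: given a witnessing $M_r$ for coverability in $N$, build the corresponding $M_r^*$ by the forward simulation and witness $M^* \sqsubseteq_\alpha M_r^*$ by $\iota(c_p) = a_p$. The main obstacle I expect is the bookkeeping of live versus garbage tokens in the backward direction; everything hinges on the singleton-and-distinctness invariant, which is what cleanly rules out any $N^*$-firing that does not correspond to a genuine $N$-firing.
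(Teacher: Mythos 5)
Your construction is exactly the paper's (shadow place $\bar p$ with a matched current identifier, reset arcs replacing it by a fresh $\nu$, garbage tokens left behind), and your argument follows the same route. In fact you supply more detail than the paper does --- the singleton-and-distinctness invariant and the explicit coverability/termination arguments are only implicit there --- so the proposal is correct and essentially the same proof.
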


\begin{proof}
 Let $N=(P,T,F,F_r)$ be a reset net.
 We consider a different variable $x_p$ for each $p\in P$.
 Then we define $N^*=(P\cup\bar P,T,F^*)$ as follows:
 \begin{itemize}
  \item If $(p,t)\in F$ then $F^*(p,t)=F^*(\bar p,t)=F^*(\bar p,t)=x_p$ (analogously for $(t,p)\in F)$,
  \item If $(p,t)\in F_r$ then $F^*(\bar p,t)=x_p$ and $F^*(t,\bar p)=\nu$.
  \item $F^*(x,y)=\emptyset$ elsewhere.
\end{itemize}

Moreover, if $M_0$ is the initial marking of $N$, we consider a different identifier
$a_p$ for each place $p$ of $N$. Then, we define the initial marking of $N^*$ as
$M^*_0(p_{now})=\{a_p\}$ and $M^*_0(p)=\{a_p,\overset{M_0(p)}\ldots,a_p\}$, for each $p\in P$.

 Intuitively, for each place $p$ of $N$ we consider a new place $\bar p$ in $N^*$. 
 The construction of $N^*$ is such that $\bar p$ contains a single
 token at any time. The firing
 of any transition ensures that the token being used in $p$ coincides with that
 in $\bar p$. Every time a transition resets a place $p$, the content
 of $\bar p$ is replaced by a fresh token, so that no token remaining in $p$ can be
 used. In this way, our simulation introduces some garbage tokens, that once 
 become garbage, always stay like that. Fig.~\ref{fig:reset} depicts
 a simple reset net and its simulation. 
\end{proof}

\begin{proposition}
 Coverability, boundedness and termination for $\nu$-PN are not primitive recursive.
\end{proposition}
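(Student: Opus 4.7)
The plan is to transfer Ackermann-hardness from analogous problems on reset nets, via the simulation of Proposition~\ref{prop:nuPN-reset}. For coverability and termination the reduction is immediate: Prop.~\ref{prop:nuPN-reset} already states that $N$ terminates iff $N^*$ terminates, and that $M$ is coverable in $N$ iff $M^*$ is coverable in $N^*$, and the construction is polynomial. Since both coverability and termination for reset nets are known to be Ackermann-hard (and so not primitive recursive), the same lower bound follows for $\nu$-PN without further work.

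For boundedness the reduction has to be adjusted, because boundedness of reset nets is undecidable (see~\cite{dufourd98reset}) and hence cannot itself be the source of our reduction. I would instead reduce reset net termination to $\nu$-PN boundedness. The construction augments $N^*$ with a new isolated place $q_{ct}$ and, for every transition $t$ of $N^*$, adds an arc from $t$ to $q_{ct}$ labelled by a fresh variable $\nu_t\in\Nu$. In the resulting $\nu$-PN $N^{**}$, every firing deposits a brand-new name on $q_{ct}$, so the multiset of names on $q_{ct}$ strictly grows along any firing sequence, and therefore so does the $\sqsubseteq_\alpha$-equivalence class of the current marking. If $N$ does not terminate then neither does $N^{**}$, and the monotone growth on $q_{ct}$ forces infinitely many reachable $\alpha$-classes. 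Conversely, if $N$ terminates then so does $N^{**}$, and König's lemma applied to the finitely branching reachability tree of $N^{**}$ modulo $\equiv_\alpha$ gives a finite reachable set. Thus $N^{**}$ is bounded iff $N$ terminates, and the Ackermann-hardness of reset net termination transfers to $\nu$-PN boundedness.

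The main obstacle I anticipate is pinning down the correct notion of boundedness for $\nu$-PN so that the equivalence above is meaningful: since a single transition can produce a continuum of syntactically distinct but $\alpha$-equivalent markings, boundedness here must mean finiteness of the reachable set modulo $\equiv_\alpha$, which is the natural reading consistent with the WSTS machinery used to prove Prop.~\ref{prop:cbt}. Once this is fixed, one still has to verify that the $q_{ct}$ gadget respects normality (Definition~\ref{def:normal}) and does not interfere with the identity-tracking performed on the $\bar p$ places in the proof of Prop.~\ref{prop:nuPN-reset}: neither of these checks is delicate, since $q_{ct}$ is isolated and each added arc produces a single fresh name, but they need to be stated explicitly to make the reduction airtight.
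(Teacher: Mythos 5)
Your proposal is correct and follows essentially the paper's own route: coverability and termination hardness are transferred directly through the reset-net simulation of Prop.~\ref{prop:nuPN-reset}, and boundedness hardness is obtained by adding a place that receives one token per firing, so that non-termination coincides with unboundedness (the paper reduces from termination of an arbitrary $\nu$-PN using an ordinary token, whereas you compose directly with $N^*$ and deposit fresh names, which is an inessential variation). Your explicit remarks that boundedness means finiteness modulo $\equiv_\alpha$ and that the ``terminates implies bounded'' direction uses finite branching plus K\"onig's lemma only make precise what the paper leaves as ``clearly'', and the normality concern is harmless since the hardness statement is about general $\nu$-PN (or can be met by reusing the single variable $\nu$ on the counter arcs).
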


\begin{proof}
 Since coverability and termination are Ackermand-hard for reset nets~\cite{Sch10},
 the previous construction entails Ackerman-hardness for coverability and
 termination in $\nu$-PN. This hardness extends to boundedness by means of a
 very simple reduction: given a $\nu$-PN $N$ it is enough to build $N'$ by adding to $N$ a place in which
 an ordinary token is put in every firing. Clearly, $N$ terminates iff $N'$ is bounded.
\end{proof}

\section{Weaker forms of boundedness}\label{sec:bounded}

Let us now discuss weaker forms of boundedness.
In the first place, we characterize boundedness (finiteness of the
reachability set) in terms of the form of every reachable marking,
as is usual in Petri nets.

\begin{lemma}
  Given a $\nu$-PN with an initial marking, the set of reachable markings
  is finite (up to $\equiv_\alpha$) if and only if there is $n\geq0$ such that
  every reachable marking $M$ satisfies $M(p)(a)\leq n$ for all $p\in P$ and $a\in\Id$.
\end{lemma}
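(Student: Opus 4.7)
The plan is to treat the two directions of the biconditional separately; the forward implication is routine and the backward is the substantive one, relying on the multiset-of-multisets representation from Def.~\ref{def:nf1}.

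Forward direction ($\Rightarrow$): suppose the set of reachable markings has only finitely many $\equiv_\alpha$-classes, and pick representatives $M_1,\dots,M_k$. Each $M_i$ has finite support, so $n_i:=\max\{M_i(p)(a) : p\in P,\ a\in\Id(M_i)\}$ is a finite natural number; take $n:=\max_i n_i$. Any reachable $M$ is $\alpha$-equivalent to some $M_i$ via a bijection $\iota$, and by definition $M(p)(\iota(a))=M_i(p)(a)\le n$, while values at identifiers outside $\iota(\Id(M_i))$ are zero. Hence the uniform pointwise bound $n$ holds.

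Backward direction ($\Leftarrow$): assume $M(p)(a)\le n$ for every reachable $M$, $p\in P$, $a\in\Id$. By Def.~\ref{def:nf1} and Lemma~\ref{equivorders}, each reachable $M$ is represented faithfully, up to $\equiv_\alpha$, by $\oM\in\mMS{(\mMS P)}$. The hypothesis forces each cell $M_a$ into the finite set $V_n:=\{v\in\mMS P\setminus\{\emptyset\} : v(p)\le n\text{ for all }p\}$, of size at most $(n+1)^{|P|}$. Thus reachable $\oM$'s lie in $\mMS{V_n}$, and finiteness of the reachability set (modulo $\equiv_\alpha$) reduces to a uniform bound on the multiplicities $\oM(v)$ for $v\in V_n$, i.e., on the width $|\Id(M)|$.

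The main obstacle, where I would concentrate the argument, is extracting this width bound from the pointwise cell bound. My plan is to exploit strict compatibility (Prop.~\ref{prop:monotonic_alpha}): in a strict WSTS the reachability set is finite iff it admits no infinite strictly $\sqsubseteq_\alpha$-increasing chain. If such a chain existed, its $\oM$-image in $(\mMS{V_n},\ll)$ would be infinite and strictly increasing; since $V_n$ is a finite poset under $\subseteq$, individual cells can only strictly grow finitely often, so past some index the chain keeps adding fresh cells, and by pigeonhole over the finite alphabet $V_n$ some pattern $v\in V_n$ must appear with diverging multiplicity along the chain. I would then inspect the firings producing the additional copies of $v$---tracking the transitions that can introduce tokens with pattern $v$ via either name creation or relabeling of existing cells---to force some entry $M(p)(a)$ above $n$, contradicting the hypothesis and concluding the proof.
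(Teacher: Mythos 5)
Your forward direction is fine (and essentially the paper's own argument), but the backward direction cannot be completed, because the statement read literally with a bound on each $M(p)(a)$ is exactly depth-boundedness (Definition~\ref{def:acotacion}), and depth-boundedness does \emph{not} imply finiteness of the reachability set. The net on the right of Fig.~\ref{fig:2no3} is a counterexample: every firing deposits a fresh name into $p_2$, so every reachable marking satisfies $M(p)(a)\leq 1$ for all $p$ and $a$, yet markings with different numbers of names in $p_2$ are pairwise non-$\alpha$-equivalent, so the reachability set is infinite up to $\equiv_\alpha$. This is precisely where your plan breaks: after the pigeonhole step you intend to ``inspect the firings producing the additional copies of $v$ \ldots{} to force some entry $M(p)(a)$ above $n$'', but when the diverging cell pattern $v$ is produced by a $\nu$-variable, each new copy of $v$ carries a fresh identifier and no entry ever exceeds $n$; there is no contradiction to be had. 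A quick sanity check inside the paper itself: if the lemma held in the form you are proving, boundedness would coincide with depth-boundedness, contradicting the discussion after Prop.~\ref{prop:acot_char} and making the decidability of boundedness (Prop.~\ref{prop:cbt}) clash with the undecidability of depth-boundedness proved later.

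The intended (and provable) statement bounds $|M(p)|$, the total number of tokens on each place, rather than each $M(p)(a)$ separately; the printed $M(p)(a)$ is a slip. With that reading both directions are elementary and no WSTS machinery (strict compatibility, increasing chains, Lemma~\ref{equivorders}) is needed: if the reachability set is finite up to $\equiv_\alpha$, take $s$ bounding $|\Id(M)|$ and $k$ bounding the individual multiplicities over representatives, so $|M(p)|\leq k\cdot s$ (this is the paper's proof); conversely, if $|M(p)|\leq n$ for every reachable $M$ and every $p$, then each reachable marking has at most $n\cdot|P|$ tokens, hence at most $n\cdot|P|$ distinct names and multiplicities at most $n$, so up to renaming there are only finitely many such markings. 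Your width-versus-depth decomposition via $\oM\in\mMS{(\mMS P)}$ is a sensible way to organize that second direction, but the width bound must come from the bound on $|M(p)|$, not from the per-identifier bound alone.
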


\begin{proof}
  If $\mathit{Reach}$ is finite we can 
  define $s=max\{|\Id(M)|\mid M\in \mathit{Reach}\}$ and $k=max\{M(p)(a)\mid M\in \mathit{Reach},~p\in P,~a\in\Id(M)\}$. 
  Then, for each reachable $M$, $|M(p)|=|\underset{a\in\supp(M(p))}{\sum}M(p)(a)|\leq k\cdot s$ and the
  net is bounded. Conversely, if the net is unbounded then for each $n$ there is
  a reachable $M_n$ such that $|M_n(p)|>n$ for all $p$, which implies the thesis.
\end{proof}

We will use the previous characterization in order
to factorize the property of boundedness.
Unlike ordinary P/T nets, that only have one infinite dimension,
$\nu$-PNs have two different sources of infinity: the number
of different identifiers and the number of times each of those
identifiers appear. Consequently, several different notions of
boundedness arise, in one of the dimensions, in the other or
in both.

\begin{definition}\label{def:acotacion} Let $N$ be a $\nu$-PN.
  \begin{itemize}
    \item We say $N$ is \emph{width-bounded} if there is $n\in\nat$ such that for all reachable $M$, 
    $|\Id(M)|\leq n$.
    \item We say $N$ is \emph{depth-bounded} if there is $n\in\nat$ such that for all reachable $M$,
    for all $p\in P$ and for all $a\in\Id, M(p)(a)\leq n$.
  \end{itemize}
\end{definition}

Indeed, width and depth-boundedness factorize boundedness.

\begin{proposition}\label{prop:acot_char}
  $N$ is bounded iff it is width-bounded and depth-bounded.
\end{proposition}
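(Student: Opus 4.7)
The plan is to handle both directions by reducing the proposition to the characterization of boundedness from the preceding lemma (together with the fact that all relevant quantities are $\equiv_\alpha$-invariant).

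For the forward direction, the key observation is that both $|\Id(M)|$ and the multiset $\{M(p)(a) \mid p\in P,\, a\in\Id(M)\}$ are invariants of $\equiv_\alpha$: if $M\equiv_\iota M'$ then $\iota$ is a bijection between $\Id(M)$ and $\Id(M')$ and preserves the values $M(p)(a)$. Hence, if the set of reachable markings is finite modulo $\equiv_\alpha$, the values $|\Id(M)|$ and $M(p)(a)$ taken over all reachable $M$ form a finite set of naturals, hence are bounded by some $n_1$ and $n_2$ respectively. This immediately gives width- and depth-boundedness.

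For the backward direction, assume bounds $n_1$ on width and $n_2$ on depth. Then for every reachable $M$ and every $p\in P$ one has
\[
|M(p)| = \sum_{a\in\supp(M(p))} M(p)(a) \;\leq\; n_1\cdot n_2,
\]
and in particular $M(p)(a)\leq n_2$ for all $p,a$, so the previous lemma applies and yields boundedness. A concrete combinatorial variant of the same argument: up to $\equiv_\alpha$, one may assume the identifiers appearing in any reachable marking are drawn from the fixed set $\{1,\ldots,n_1\}$, so each $\alpha$-class of reachable markings is represented by some function $P\times\{1,\ldots,n_1\}\to\{0,\ldots,n_2\}$, of which there are only $(n_2+1)^{|P|\,n_1}$ in total; hence $\mathit{Reach}/\!\equiv_\alpha$ is finite.

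There is no real obstacle here: once $\equiv_\alpha$-invariance of $|\Id(M)|$ and of the values $M(p)(a)$ is noted, both directions follow by a one-line calculation and a direct appeal to the characterization of boundedness given in the preceding lemma. The only subtlety worth flagging is that the two bounds $n_1,n_2$ are genuinely independent (a purely name-creating net shows that depth-boundedness alone does not suffice for boundedness), which justifies the factorization asserted by the proposition.
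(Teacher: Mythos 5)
Your proof is correct, and it takes a somewhat different route from the paper's. The paper argues through the numeric characterization of boundedness: using the inequality $|M(p)|\leq|\Id(M)|\cdot\max\{M(p)(a)\mid a\in\Id\}$, it passes back and forth between a uniform bound on $|M(p)|$ (identified with boundedness via the preceding lemma) and the pair of bounds on $|\supp(M(p))|$ and on the multiplicities. You instead work directly with finiteness of $\mathit{Reach}$ modulo $\equiv_\alpha$: in the forward direction you take maxima of the $\alpha$-invariant quantities $|\Id(M)|$ and $M(p)(a)$ over the finitely many classes (in effect re-proving the easy half of that lemma), and in the backward direction your counting argument --- at most $(n_2+1)^{|P|\cdot n_1}$ $\alpha$-classes once all names are renamed into a fixed set of size $n_1$ --- yields finiteness outright, with no appeal to the lemma. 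This is self-contained and arguably cleaner; the paper's version has the advantage of producing the explicit bound $|M(p)|\leq n\cdot m$ on place contents, which is the formulation it uses elsewhere.

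One caution: your step ``in particular $M(p)(a)\leq n_2$ for all $p,a$, so the previous lemma applies'' relies on the lemma as literally printed (with the condition $M(p)(a)\leq n$). Read that way, the lemma would make width-boundedness superfluous, and it is contradicted by the purely name-creating net you yourself mention; the intended condition is clearly $|M(p)|\leq n$, as the lemma's own proof shows. Under that corrected reading, your computation $|M(p)|\leq n_1\cdot n_2$ is exactly what is needed, and in any case your combinatorial variant closes the backward direction independently, so there is no real gap.
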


 \begin{proof}
     It is enough to consider that $|M(p)|=|\underset{a\in\Id(M)}{\sum}M(p)(a)|\leq |\Id(M)|\cdot max\{M(p)(a)\mid a\in\Id\}$. 
If there is $n\in\nat$ such that $|M(p)|\leq
    n$ then $|\underset{a\in\Id(M)}{\sum}M(p)(a)|\leq n$ and since 
    $\Id(M)=\{a\in\Id\mid M(p)(a)>0$ for some $p\}$ we have that $|\supp(M(p))|\leq n$ and
    and for all $a\in\supp(M(p))$, $M(p)(a)\leq n$. Conversely,
    let us assume there are $n$ and $m$ such that $|\supp(M(p))|\leq n$ and
    $M(p)(a)\leq m$. From the latter if follows that $max\{M(p)(a)\mid a\in\supp(M(p))\}\leq m$. 
    Then, by the previous observation,
    $|M(p)|\leq n\cdot m$ and the thesis follows.
 \end{proof}

Thanks to the previous result  
we know that if a $\nu$-PN is bounded then it is width-bounded
and depth-bounded. However, if it is unbounded it could still be
the case that it is width-bounded (see left of Fig.~\ref{fig:2no3}) or
depth-bounded (see right of Fig.~\ref{fig:2no3}), though not simultaneously
width and depth-bounded.

\begin{figure}[!t]
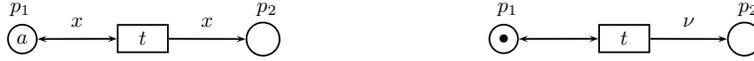

\vspace{0.8cm}
\begin{center}
 \scalebox{0.8}{
        \cnodeput(-6,0){p}{$a$}
        \cnodeput(-2,0){q}{\white$k$}
        \rput[B](-4,-0.1){\Rnode{t}{\psframebox{$~~t~~$}}}
        \ncline[nodesep=0pt]{<->}{p}{t}
        \Aput{$x$}
        \ncline[nodesep=0pt]{->}{t}{q}
        \Aput{$x$}
        \uput{0}[0](-6.2,0.5){$p_1$}
	\uput{0}[0](-2.1,0.5){$p_2$}
        \cnodeput(2,0){p}{$\bullet$}
        \cnodeput(6,0){q}{\white$k$}
        \rput[B](4,-0.1){\Rnode{t}{\psframebox{$~~t~~$}}}
        \ncline[nodesep=0pt]{<->}{p}{t}
        \ncline[nodesep=0pt]{->}{t}{q}
        \Aput{$\nu$}
	\uput{0}[0](1.9,0.5){$p_1$}
	\uput{0}[0](5.9,0.5){$p_2$}
    }
 \end{center}
\vspace{-0.2cm} \caption{Width-bounded but not depth-bounded $\nu$-PN (left) and viceversa (right)}\label{fig:2no3}
\vspace{-0cm}\end{figure}

In~\cite{atpn10} we prove decidability of width-boundedness for $\nu$-PN.
The proof relies on the results in~\cite{forwardI,forwardII} that establish
a framework for forward analysis for WSTS. We do not show the details here,
since they are rather involved. 

Though width and depth-boundedness seem to play a dual role, the proof
of decidability of width-boundedness can not be adapted in the case of
depth-boundedness. Actually, depth-boundedness turns out to be
undecidable, though this fact could be considered to be rather
anti intuitive (actually, in the paper~\cite{Dietze07} there is a wrong
decidability proof).

\begin{proposition}
 Depth-boundedness is undecidable for $\nu$-PN.
\end{proposition}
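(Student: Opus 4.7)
The plan is to reduce boundedness in reset nets (undecidable by Dufourd et al.~\cite{dufourd98reset}) to depth-boundedness in $\nu$-PN, reusing the construction $N^*$ from Proposition~\ref{prop:nuPN-reset}. The key observation that has to be checked is that even though $N^*$ accumulates garbage whenever it simulates a reset, the multiplicity of any single name in any place stays bounded precisely when the reset net is bounded.

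First I would recall the shape of the simulation: for each place $p$ of $N$ there is a companion place $\bar p$ in $N^*$ holding a single ``current'' name $a_p$; tokens in $p$ labelled by $a_p$ represent the contents of $p$ in $N$; every normal transition binds $x_p$ to the name read from $\bar p$, so it can only move tokens bearing the current name; a reset of $p$ is implemented by consuming the current name from $\bar p$ and producing a fresh $\nu$ there, leaving the previously deposited tokens in $p$ as garbage with an obsolete label.

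Next I would prove the equivalence $N$ bounded $\Leftrightarrow$ $N^*$ depth-bounded. For $(\Leftarrow)$, if the depth of $N^*$ is bounded by $D$, then Proposition~\ref{prop:nuPN-reset} gives, for every reachable $M$ of $N$, a reachable $M^*$ of $N^*$ with $M(p)=M^*(p)(a_p)\leq D$, so $N$ is bounded by $D$. For $(\Rightarrow)$, assume $N$ is bounded by $K$, fix a reachable $M^*$ of $N^*$, a place $p$ and a name $a\in\Id(M^*)$. Two cases: either $a$ is the current name $a_p$, in which case $M^*(p)(a)=M(p)\leq K$ for the corresponding reachable marking $M$ of $N$; or $a$ is a garbage name, i.e.\ it was the current name at some earlier point but no longer appears in $\bar p$. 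In the garbage case, I argue that $M^*(p)(a)$ equals its value at the instant $a$ became garbage. Indeed, once $a\neq M^*(\bar p)$, no transition can read or write a token labelled $a$ in $p$: every transition that touches $p$ also binds $x_p$ to the single token in $\bar p$, which is not $a$. So garbage is both unproduced and unconsumed thereafter, and its count at the moment of creation equalled the value $M(p)$ of the reset-net marking just before the reset, which was $\leq K$.

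Combining the two cases yields $M^*(p)(a)\leq K$ for every $p$ and every $a$, whence $N^*$ is depth-bounded. The reduction is computable (we are reusing the polynomial-time construction of Proposition~\ref{prop:nuPN-reset}), so the undecidability of boundedness for reset nets transfers to depth-boundedness for $\nu$-PN. The main obstacle, and the step I would argue most carefully, is the invariance of garbage multiplicities: one must verify from the syntactic definition of $F^*$ that every transition of $N^*$ reads $\bar p$ through the very same variable $x_p$ used to read $p$, so that a mismatch between the label in $\bar p$ and a garbage label in $p$ truly freezes that garbage forever. Once this is nailed down, the rest of the argument is a direct instantiation of the correspondence already given in Proposition~\ref{prop:nuPN-reset}.
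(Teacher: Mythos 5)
Your proof is correct and follows essentially the same route as the paper: reduce boundedness of reset nets to depth-boundedness of $\nu$-PN via the construction of Prop.~\ref{prop:nuPN-reset}, observing that the reset net is bounded iff its simulation is depth-bounded. The only difference is that you spell out the equivalence (in particular the argument that garbage multiplicities are frozen once the name in $\bar p$ changes), which the paper leaves implicit in its one-line ``Notice that'' step, so your write-up is a more detailed version of the same proof.
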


\begin{proof}
 Given a $\nu$-PN $N$, let us consider the reset net $N^*$ built in
 Prop.~\ref{prop:nuPN-reset}. Notice that $N$ is bounded iff $N^*$ is depth-bounded.
 Since boundedness in reset nets is undecidable~\cite{dufourd98reset} we can conclude.
\end{proof}

\section{Conclusions and Future Work}\label{sec:conclusion}

In this paper we have studied the expressive power of a simple extension
of P/T nets with a primitive that creates fresh names. We knew that the
expressive power of P/T nets is strictly increased because, unlike for P/T nets, 
reachability is undecidable. However, Turing-completeness is not reached. We have seen it
by proving that $\nu$-PNs are strictly well-structured systems. In particular, we obtain
that coverability is still decidable for them, as well as boundedness. Therefore,
$\nu$-PN is in the class of models whose expressive power lies somewhere in
between P/T nets and Turing machines, like Lossy FIFO channel systems~\cite{AbdullaJ93} or
reset nets~\cite{dufourd98reset}.

We have also defined two orthogonal notions of boundedness. Since our nets
have names as tokens, it can be the case that a bounded number of different
names appear in every reachable marking. In that case (independently of
the number of times that those each of those names appears) we say the net
is width-bounded. Dually, if every name that appears in every reachable
marking appears only a bounded number of times (independently of how many
different names appear) then we say that the net is depth-bounded.
Though width-boundedness is decidable, we have proved undecidability of 
depth-boundedness by reducing boundedness in reset nets to it.

Many well structured transition systems have undecidable reachability, except
some notable exceptions. Moreover, we know that coverability is always
decidable for them. Thus, in order to compare the expressive power of different formalisms that
lie in this class, reachability and coverability are not enough. One could
consider other properties, as different notions of boundedness, though we 
have seen that boundedness properties tend to be rather tricky. A different
option is to consider the languages generated when we label transitions with
labels taken from a finite set. Because of the undecidability of reachability,
if we accept words that can be recognized when reaching a given marking, then
we generally obtain the set of recursibly enumerable languages. In~\cite{WSL}
the authors propose to use coverability as accepting condition instead.
This yields a better framework to relate well structured transition systems.
In~\cite{LATA10} such framework is used in order to compare $\nu$-PN with
other Petri net extensions as Affine Well Nets or Data Nets. However, the
distinction between $\nu$-PN and Data Nets remains an open problem.

\bibliographystyle{elsarticle-num}

\begin{thebibliography}{99}
%
\bibitem{AbdullaJ93}P. A. Abdulla, and B. Jonsson.
{\it Verifying Programs with Unreliable Channels.}
Information and Computation, 127(2):91-101, 1996.
%
\bibitem{AbadiG97}M. Abadi, and A.D. Gordon. 
{\it A Calculus for Cryptographic Protocols: The Spi Calculus.} Fourth ACM Conference on Computer and Communications Security, ACM Press, 1997, 36-47.
%
\bibitem{AbdullaCJT00}P. A. Abdulla, K. Cerans, B. Jonsson, and Y. Tsay. {\it Algorithmic analysis of programs with well quasi-ordered domains. } Information and Computation 160:109-127. Academic Press Inc., 2000.
%
\bibitem{cardelli98mobile} L. Cardelli, and A.D. Gordon.
{\it Mobile Ambients.} 1st Int. Conf. on Foundations of Software
Sciences and Computation Structures, FOSSACS'98. LNCS 1387,
pp. 140-155. Springer, 1998.
%
\bibitem{DeckerW08}G. Decker, and M. Weske.
{\it Instance Isolation Analysis for Service-Oriented Architectures.}
In Proceedings of the 2008 IEEE International Conference on Services
SCC'08, pp. 249-256. IEE Computer Society, 2008.
%
\bibitem{Desel98}
J\"org Desel and Wolfgang Reisig.
\newblock {\it Place/transition petri nets.}
\newblock Lectures on Petri Nets I: Basic Models, LNCS 1491, pp.122--173. Springer, 1998.
%
\bibitem{Dietze07} {R. Dietze, M. Kudlek and O. Kummer.}
Decidability Problems of a Basic Class of Object Nets.
\emph{Fundamenta Informaticae} 79. IOS Press (2007) 295-302. 
%
\bibitem{dufourd98reset}C. Dufourd, A. Finkel and Ph. Schnoebelen.
{\it Reset nets between decidability and undecidability.}
25th Int. Col. on Automata, Languages and Programming, ICALP'98.
LNCS 1443, pp. 103-115. Springer, 1998.
%
\bibitem{Esparza94} J. Esparza and M. Nielsen.
\newblock {\it Decidability issues for Petri Nets - a survey.}
\newblock Bulletin of the EATCS 52:244-262 (1994).
%
\bibitem{Finkel01Everywhere} A.Finkel, and P.Schnoebelen.
{\it Well-Structured Transition Systems Everywhere!} Theoretical
Computer Science 256(1-2):63-92 (2001).
%
\bibitem{forwardI} {A. Finkel and J.Goubault-Larrecq}.
{ Forward analysis for WSTS, Part I: Completions.}
\emph{In Proceedings of the 26th International Symposium on Theoretical Aspects of Computer Science, STACS'09} (2009) 433-444.
%
\bibitem{forwardII} {A. Finkel and J.Goubault-Larrecq}.
{ Forward analysis for WSTS, Part II: Complete WSTS.}
\emph{In 36th International Colloquium on Automata, Languages and Programming, ICALP'09}.
LNCS 5556. Springer (2009) 188-199.
%
\bibitem{WSL} G. Geeraerts, J. Raskin, and L. Van Begin.
{\it Well-structured languages.}
Acta Informatica, 44:249-288. Springer, 2007.
%
\bibitem{Gordon00} A. Gordon.
{\it Notes on Nominal Calculi for Security and Mobility.}
Foundations of Security Analysis and Design, FOSAD'00. LNCS 
2171, pp. 262-330. Springer, 2001.
%
\bibitem{KohlerRolke04} M. K\"ohler, and H. R\"olke.
Properties of Object Petri Nets.
25th Int. Conf. on Petri Nets, ICATPN'04. LNCS 3099, pp. 278-297.
Springer, 2004.
%
\bibitem{Kohler07} M. K\"ohler.
Reachable markings of object Petri nets.
Fundamenta Informaticae 79(3-4):401-413 (2007)
%
\bibitem{Kohler09} M. K\"ohler, and F. Heitmann.
On the expressiveness of communication channels for object nets.
Fundamenta Informaticae 93(13):205-219 (2009)
%
\bibitem{Kummer00} O. Kummer.
{\it Undecidability in object-oriented Petri nets.} Petri Net
Newsletter, 59:18-23, 2000.
%
\bibitem{RENEW} O. Kummer, F. Wienberg, M. Duvigneau, J. Schumacher, M. K\"ohler, D. Moldt, H. R\"olke, and R. Valk.
An Extensible Editor and Simulation Engine for Petri Nets: Renew.
In 25th Int. Conf. on Petri Nets, ICATPN'04. LNCS 3099, pp. 484-493. Springer, 2004.
%
\bibitem{Lazic08} R. Lazic, T.C. Newcomb, J. Ouaknine, A.W. Roscoe and J. Worrell.
{\it Nets with Tokens Which Carry Data}. Fundamenta Informaticae
88 (3):251-274. IOS Press, 2008.
%
\bibitem{Lomazova00}I. Lomazova.
{\it Nested petri nets - a formalism for specification and verification of multi-agent distributed systems. }
Fundamenta Informaticae 43(1-4):195-214. IOS Press, 2000.
%
\bibitem{LomazovaSch00} I. Lomazova, and Ph. Schnoebelen.
{\it Some Decidability Results for Nested Petri Nets.}
3rd Int. Andrei Ershov Memorial Conf. on Perspectives of System Informatics, PSI'99. LNCS vol. 1755, pp. 208-220.
Springer, 2000.
%
\bibitem{Milner85} E.C. Milner. 
Basic WQO- and BQO-theory.
In I. Rival (ed.). Graphs and Order. 
The Role of Graphs in the Theory of Ordered Sets and Its Applications. D. Reidel Publishing Co.. pp. 487-502.
%
\bibitem{MilnerPW92} R. Milner, J. Parrow, and D. Walker.
{\it A Calculus of Mobile Processes, I.} Information and
Computation 100(1): 1-40 (1992).
%
\bibitem{Secco} F. Rosa-Velardo, D. de Frutos-Escrig, and O.
Marroqu\'{i}n-Alonso.
\newblock {\it On the expressiveness of Mobile Synchronizing Petri Nets.}
\newblock SecCo'05. ENTCS 180(1):77-94. Elsevier, 2007.
%
\bibitem{LATA10} F. Rosa-Velardo, and G. Delzanno.
Language-based Comparison of Nets with Black Tokens, Pure Names and Ordered Data. 4th Int. Conf. on Language and Automata Theory and Applications, LATA 2010. LNCS 6031, pp. 524-535. Springer, 2010.
%
\bibitem{atpn10} F. Rosa-Velardo, D. de Frutos-Escrig.
Forward analysis for Petri nets with name creation.
31st Int. Conf. on Applications and Theory of Petri nets and other models of 
concurrency, PETRI NETS 2010. LNCS vol. 6128, pp. 185-205. Springer, 2010.
%
\bibitem{fi08} F. Rosa-Velardo, D. de Frutos-Escrig.
Name Creation vs. Replication in Petri Net Systems.
Fundamenta Informaticae 88(3):329-356. IOS Press, 2008.
%
\bibitem{Sch10} P Schnoebelen.
Revisiting Ackermann-Hardness for Lossy Counter Machines and Reset Petri Nets.
In 355h Int. Symposium on Mathematical Foundations of Computer Science, MCFS'10. LNCS vol. 6281, pp. 616-628. Springer (2010)
%
\bibitem{Valk95} R. Valk.
Petri Nets as Dynamical Objects.
16th Int. Conf. on Application and Theory of Petri Nets. Workshop
proceedings, 1995.
%
\bibitem{Valk98} R. Valk.
Petri Nets as Token Objects - An Introduction to Elementary Object Nets.
19th Int. Conf. on Applications
and Theory of Petri Nets, ICATPN'98,
LNCS 1420, pp. 1-25. Springer, 1998.
%
\end{thebibliography}

\end{document}